\newtheorem{theorem}{Theorem}[section]
\newtheorem{corollary}[theorem]{Corollary}
\theoremstyle{definition}
\newtheorem{definition}[theorem]{Definition}
\newcommand{\one}[1]{\mathbf{1}\left\{#1\right\}}
\newcommand{\calN}{\mathcal{N}}
\newcommand{\calV}{\mathcal{V}}
\newcommand{\calM}{\mathcal{M}}
\newcommand{\calD}{\mathcal{D}}
\newcommand{\calE}{\mathcal{E}}
\newcommand{\dist}{\mathtt{dist}}
\newcommand{\gdist}{\mathtt{ddist}}
\newcommand{\vv}{\mathbf{v}}
\newcommand{\SW}{\text{\normalfont SW}}
\newcommand{\RV}{\text{\normalfont RV}}
\newcommand{\PV}{\text{\normalfont PV}}
\begin{document}

\allowdisplaybreaks

\title{\bf The distortion of distributed voting\thanks{A preliminary version of this paper appeared in {\em Proceedings of the 12th International Symposium on Algorithmic Game Theory (SAGT)}, pages 312-325, 2019. This work has been supported by the Swiss National Science Foundation under contract number 200021\_165522 and by the European Research Council (ERC) under grant number 639945 (ACCORD).}}

\author[1]{Aris Filos-Ratsikas}
\author[2]{Evi Micha}
\author[3]{Alexandros A. Voudouris}

\affil[1]{University of Liverpool, UK}
\affil[2]{University of Toronto, Canada}
\affil[3]{University of Oxford, UK}

\renewcommand\Authands{ and }
\date{}

\maketitle   

\begin{abstract}
Voting can abstractly model any decision-making scenario and as such it has been extensively studied over the decades. 
Recently, the related literature has focused on quantifying the impact of utilizing only limited information in the voting process on the societal welfare for the outcome, by bounding the {\em distortion} of voting rules. 
Even though there has been significant progress towards this goal, almost all previous works have so far neglected the fact that in many scenarios (like presidential elections) voting is actually a distributed procedure. In this paper, we consider a setting in which the voters are partitioned into disjoint districts and vote locally therein to elect local winning alternatives using a voting rule; the final outcome is then chosen from the set of these alternatives. We prove tight bounds on the distortion of well-known voting rules for such distributed elections both from a worst-case perspective as well as from a best-case one. Our results indicate that the partition of voters into districts leads to considerably higher distortion, a phenomenon which we also experimentally showcase using real-world data. 
\end{abstract}    

\section{Introduction}\label{sec:intro}

In a decision-making scenario, the task is to aggregate the opinions of a group of different people into a common decision. This process is often distributed, in the sense that smaller groups first reach an agreement, and then the final outcome is determined based on the options proposed by each such group. This can be due to scalability issues (e.g., it is hard to coordinate a decision between a very large number of participants), due to different roles of the groups (e.g., when each group represents a country in the European Union), or simply due to established institutional procedures (e.g., electoral systems).
For example, in the US presidential elections\footnote{https://www.usa.gov/election}, the voters in each of the 50 states cast their votes within their regional district, and each state declares a winner; the final winner is taken as the one that wins a weighted plurality vote over the state winners, with the weight of each state being proportional to its size. 

The foundation of utilitarian economics, which originated near the end of the 18th century, revolves around the idea that the outcome of a decision making process should be one that maximizes the well-being of the society, which is typically captured by the notion of the \emph{social welfare}. A fundamental question that has been studied extensively in the related literature is whether the rules that are being used for decision making actually achieve this goal, or to what extend they fail to do so. However, this line of work has so far focused almost exclusively on scenarios where the process is centralized. This naturally motivates the following question: 
\begin{center}
\emph{What is the effect of distributed decision making on the social welfare?}
\end{center}

The importance of this investigation is highlighted by the example of the 2016 US presidential election~\citep{wiki}. While 48.2\% of the US population (that participated in the election) viewed Hillary Clinton as the best candidate, Donald Trump won the election with only 46.1\% of the popular vote. This irregularity happened due to the district-based electoral system, and the outcome would have been different if there was just a single pool of voters instead. 
A similar phenomenon occurred in the 2000 presidential election as well, when Al Gore won the popular vote, but George W. Bush was elected president.

\subsection{Our Setting and Contribution}
For concreteness, we use the terminology of voting as a proxy for any distributed decision-making scenario. A set of voters are called to vote over a set of alternatives through a district-based election. In other words, the set of voters is partitioned into \emph{districts} and each district holds a local election over all alternatives, following some voting rule. The winner of each local election is awarded a weight that depends on the district, and the overall winner is chosen to be the alternative with the highest weight. Observe that this setting models many scenarios of interest, such as those highlighted in the above discussion. 
When it comes to the partition of voters into districts and their weights, we consider the following three cases: 
 \emph{symmetric districts}, in which every district has the same number of voters and contributes the same weight to the final outcome, 
 \emph{unweighted districts}, in which the weight is still the same, but the sizes of the districts may vary, and finally 
\emph{unrestricted districts}, where the sizes and the weights of the districts are unconstrained. 

We are interested in the effect of the distributed nature of such elections on the social welfare of the voters (the sum of their valuations for the chosen outcome). Typically, when there is a single pool of voters, this effect is quantified by a measure known as \emph{distortion} \citep{procaccia2006distortion}, which is defined as the worst-case ratio (over all possible valuation profiles for the voters) between the maximum social welfare for any alternative and the social welfare for the alternative chosen through voting. 
Our goal in this paper is to bound the {\em distributed distortion} of voting rules, which is a natural extension of the distortion measure for the case of district-based elections, both from a worst- and a best-case perspective. 

We start our technical analysis in \cref{sec:generalrules} by considering general voting rules (which might have access to the numerical valuations of the voters) and provide distributed distortion guarantees for any voting rule as a function of the worst-case distortion of the voting rule when applied to a single district. As a corollary, we obtain distributed distortion bounds for \emph{Range Voting}, the rule that outputs the alternative that maximizes the social welfare when the valuations are a priori known, and prove that this rule is optimal among all voting rules for the problem. Then, in \cref{sec:ordinalrules}, we consider ordinal voting rules (which do not have access to the valuations of the voters) and provide general lower bounds on the distributed distortion of any such rule. For the widely-used \textit{Plurality Voting}, the rule that elects the alternative with most first position appearances in the ordinal preferences of the voters, we provide \emph{tight} distortion bounds and prove that it is asymptotically the best ordinal voting rule in terms of distributed distortion. An overview of our distributed distortion bounds for Range Voting and Plurality Voting can be found in Table~\ref{tab:overview}. We complement these theoretical findings with indicative simulations based on real datasets in \cref{sec:experiments}, showcasing the distributed distortion of well-known voting rules on ``average case'' and ``average worst-case'' district partitions. 

\begin{table}[t]
\centering
\setlength{\tabcolsep}{4.5pt}
\begin{tabular}{l | l l }
\noalign{\hrule height 1pt}\hline
type & Range Voting & Plurality Voting \\
\hline
symmetric & $1 + \frac{mk}{2}$ & $1 + \frac{3m^2k}{4}$ \\

unweighted & $1 + \frac{m}{2} \left(\frac{n+\max_{d \in \calD}n_d}{\min_{d \in \calD}n_d}-1 \right)$ & $1 + \frac{m^2}{4} \left(\frac{3n+\max_{d \in \calD}n_d}{\min_{d \in \calD}n_d}-1 \right)$ \\

unrestricted & $1 + m \left( \frac{n}{\min_{d \in \calD}n_d} -1 \right)$ & $1 + m^2 \left( \frac{n}{\min_{d \in \calD}n_d} - \frac{1}{2} \right)$  \\
\noalign{\hrule height 1pt}\hline
\end{tabular}
\caption{An overview of the distributed distortion guarantees achieved by Range Voting and Plurality Voting in worst-case district-based elections. Here, $n$ is the number of voters, $m$ is the number of alternatives, $\calD$ is the set of districts (which defines a partition of the set of voters) such that district $d \in \calD$ consists of $n_d$ voters, and $k=|\calD|$ is the number of districts.}
\label{tab:overview}
\end{table}

Driven by the inherently large distributed distortion of voting rules under worst-case partitions of the voter into  districts, in \cref{sec:districting} we explore whether \emph{districting}, i.e., manually partitioning the voters into a given number of symmetric districts in the best-way possible, can lead to better outcomes. More concretely, we study whether we can recover the winner of Range Voting (that is, the optimal alternative) or Plurality Voting as if these rules were applied on just a single pool of voters. For Range Voting, our results are mostly negative: we present instances where recovering the optimal alternative is not possible under any districting, and prove that computing an optimal districting is NP-hard. In contrast, the problem is much easier for Plurality: we can compute a districting that leads to the election of the single-district Plurality winner in polynomial time. 
We conclude with possible avenues for future work in \cref{sec:future}. 

\subsection{Related Work}
The distortion framework was first proposed by \citet{procaccia2006distortion} and subsequently it was adopted by a series of papers; for instance, see \citep{ABFV20,anshelevich2018approximating,anshelevich2017randomized,benade2017preference,bhaskar2018truthful,Boutilier,caragiannis2017subset,filos2014truthful,mandal2019efficient}. 
The original idea of the distortion measure was to quantify the loss in performance due to the lack of \emph{information}, in the sense of how well an ordinal voting rule (that has access only to the preference orderings induced by the numerical values of the voters, and not to the exact values themselves) can approximate the cardinal social welfare objective. In our paper, this loss of efficiency will be attributed to two factors: \emph{always} the fact that the election is district-based, and \emph{possibly} also the fact that the voting rules employed are ordinal. 

Our setting follows closely that of \citet{Caragiannis2011embedding}, \citet{Boutilier} and \citet{caragiannis2017subset}, with the novelty of introducing district-based elections and measuring their (distributed) distortion. The worst-case distortion bounds of voting rules in the absence of districts can be found in the aforementioned papers. Besides deterministic voting rules, these papers have also studied randomized voting rules and showed that they naturally lead to considerably better distortion bounds. However, deterministic rules are much easier and better understood by the people who are called in to vote, which is why they are extensively used in the practical scenarios we have in mind (e.g. elections). Consequently, in this paper we focus entirely on deterministic voting rules. 

The ill effects of district-based elections have been highlighted in a series of related articles, mainly revolving around the issue of \emph{gerrymandering} \citep{schuck1987thickest}, that is, the systematic manipulation of the geographical boundaries of an electoral constituency in favor of a particular political party. The effects of gerrymandering have been studied in the related literature before \citep{borodin2018big,cohen2018gerrymandering,lev2019reverse,ito2019gerrymandering}, but not in relation to the induced distortion of the elections. While our district partitions are not necessarily geographically-based, our worst-case bounds capture the potential effects of gerrymandering on the deterioration of the social welfare. Other works on district-based elections and distributed decision-making include \citep{bachrach2016misrepresentation,erdelyi2015more}. 

One of the most closely related works to ours is that of \citet{borodin2019primaries}, who study the distortion of primary elections. In their setting, each district corresponds to a political party, and the members of the party vote over a set of alternatives, in order to elect their representative, who then competes in an election with the representatives of the other parties. This model is inherently different from ours: in our case the voters are partitioned into districts and vote over all alternatives, while in their case the voters vote only for the representatives, who are selected by the members of their political parties. 

Related to our results in \cref{sec:districting} is the paper by \citet{lewenberg2017divide}, where the authors explore the effects of districting with respect to the winner of Plurality, when ballot boxes are placed on the real plane, and voters are partitioned into districts based on their nearest ballot box. The extra constraints imposed by the geological nature of the districts in their setting leads to an NP-hardness result for the districting problem, whereas for our unconstrained (other than being symmetric) districts, we prove that making the Plurality winner the winner of the general election is always possible in polynomial time. In contrast, the problem becomes NP-hard when we are interested in the winner of Range Voting instead of Plurality.


\section{Preliminaries}\label{sec:prelim}
In this section we give formal preliminary definitions, notation and examples. 

\subsection{District-based Elections}
We start by defining the types of elections on which we focus in this paper. A \emph{district-based election} $\calE$ is defined as a tuple $\left(\calM, \calN, \calD, \mathbf{w}, \vv, f \right)$, where
\begin{itemize}
\item $\calM$ is a set of $m$ \emph{alternatives}.

\item $\calN$ is a set of $n$ \emph{voters}.

\item $\calD$ is a set of $k$ districts that define a partition of the set of voters. District $d \in \calD$ contains $n_d$ voters such that $\sum_{d \in \calD} n_d = n$. For each voter $i \in \calN$, we denote by $d(i) \in \calD$ the district she belongs to.

\item $\mathbf{w}=(w_d)_{d \in \calD}$ is a \emph{weight-vector} consisting of a weight $w_d \in \mathbb{R}_{> 0}$ for each district $d \in \calD$.

\item $\vv=(\vv_1,\ldots, \vv_n)$ is a \emph{valuation profile} for the $n$ voters such that $\vv_i = (v_{ij})_{j \in \calM}$ contains the {\em valuation} of voter $i$ for all alternatives. For every district $d \in \calD$, we denote by $\vv_d = (\vv_i)_{i: d(i)=d}$ the valuation subprofile of the voters that belong to district $d$ so that $\vv = \bigcup_{d \in \calD}\vv_d$. Following the standard convention in the related literature, we adopt the unit-sum representation of valuations, according to which $\sum_{j \in \calM}v_{ij}=1$ for every voter $i \in \calN$. Let $\calV^n$ denote the set of valuation profiles for $n$ voters.

\item $f$ is a \emph{voting rule} that maps a valuation (sub)profile to a single alternative in $\calM$.   
\end{itemize}

For each district $d \in \calD$, a \emph{local} election between its members takes place, and the winner of this election is the alternative $j_d = f(\vv_d)$ that gets elected according to the voting rule $f$, when given as input the valuation subprofile $\vv_d$. 
The outcome of the district-based election $\calE$ is an alternative 
\begin{align*}
j(\calE) \in \arg\max_{j \in M} \left\{ \sum_{d \in \calD} w_d \cdot \one{j = j_d} \right\},
\end{align*}
where $\one{X}$ is equal to $1$ if the event $X$ is true, and $0$ otherwise. 
In simple words, the winner $j(\calE)$ of the district-based election is the alternative with the highest weighted approval score, breaking ties arbitrarily. For example, when all weights are equal to $1$, $j(\calE)$ is the alternative that wins the most local elections.

We consider the following non-exhaustive list of district-based elections, depending on the sizes and weights of the districts:
\begin{itemize}
\item {\em Symmetric elections}: all districts consist of the same number of voters and have the same weight, i.e., $n_d = n/k$ and $w_d=1$ for each $d \in \calD$.

\item {\em Unweighted elections}: all districts have the same weight, but not necessarily the same number of voters, i.e., $w_d=1$ for each $d \in \calD$.

\item {\em Unrestricted elections}: there are no restrictions on the sizes and weights of the districts. 
\end{itemize}
Clearly, the class of symmetric elections is a subclass of that of unweighted elections, which in turn is a subclass of the class of unrestricted elections.

\subsection{Social Welfare and Distortion}
For a given valuation profile $\vv$, the {\em social welfare} of the voters for alternative $j \in \calM$ is defined as the total value that the voters have for $j$:
\begin{align*}
\SW(j | \vv) = \sum_{i \in \calN} v_{ij}.
\end{align*}
Clearly, the social welfare is a benchmark that distinguishes the good alternatives from the bad ones, and our goal is to elect the alternative that maximizes the social welfare. However, this may not always be possible due to various reasons, like limited access to the valuations of the voters. To quantify the loss of welfare due to the use of a particular voting rule we use the notion of {\em distortion}. In our district-based elections setting, the voting rule has a local effect within each district, and a global effect over the whole district-based election. 

The {\em local distortion} of a voting rule $f$ in a local election consisting of $\eta$ voters is defined as the worst-case ratio, over all possible valuation profiles of the $\eta$ voters participating in the local election, between the maximum social welfare of any alternative and the social welfare of the alternative chosen by the voting rule:
\begin{align*}
\dist_\eta(f) = \sup_{\vv \in \calV^\eta} \frac{\max_{j \in \calM}\SW(j | \vv)}{\SW(f(\vv) | \vv)}.
\end{align*}
We have that $\dist_1(f) \leq \dist_2(f) \leq ... \leq \dist_n(f)$. To simplify our discussion, we denote by $\dist(f) = \dist_n(f)$ the (single-district) {\em distortion} of voting rule $f$, which is the classical definition of distortion used in the literature for elections without districts. 

The {\em distributed distortion} of a voting rule $f$ in district-based elections of $k$ districts is defined as the worst-case ratio, over all possible district-based elections $\calE$ that use $f$ as the voting rule in the local elections and partition the voters into $k$ districts, between the maximum social welfare of any alternative and the social welfare of the alternative chosen by the election:
\begin{align*}
\gdist_k(f) = \sup_{\calE: f \in \calE, |\calD|=k} \frac{\max_{j \in \calM}\SW(j | \vv)}{\SW(j(\calE) | \vv)}.
\end{align*}
In simple words, the distributed distortion of a voting rule $f$ is the worst-case over all the possible valuations that voters can have and over all possible ways of partitioning these voters into $k$ districts. We trivially have that $\dist(f)=\gdist_1(f)$. For notational convenience, we will drop $k$ from $\gdist_k(f)$ when it is clear from context, and simply write $\gdist(f)$. 

\subsection{Voting Rules}
We distinguish the voting rules as cardinal and ordinal, depending on the amount of information related to the valuation profile they use to decide the outcome. In particular, a voting rule $f$ is {\em cardinal} if it decides the winning alternative by using the exact numerical values of the voters for the alternatives. The most prominent cardinal voting rule is {\em Range Voting} (RV, for short), which outputs the alternative that maximizes the social welfare.

\begin{definition}[Range Voting] \label{def:RV}
Given a valuation profile $\vv = (\vv_1, ..., \vv_\eta)$ with $\eta$ voters, Range Voting elects an alternative in $\arg\max_{j \in \calM} \SW(j | \vv)$.
\end{definition}

In contrast, an ordinal voting rule has access only to the preference orderings induced by the values of the voters, and not to the actual numerical values themselves. Formally, for a valuation profile $\vv$, we denote by $\boldsymbol{\succ}^\vv = (\succ_i^\vv)_{i \in \calN}$ the ordinal preference profile formed by the values of the voters for the alternatives (assuming some fixed tie-breaking rule) so that $j \succ_i^\vv j'$ (read voter $i$ prefers alternative $j$ to alternative $j'$ according to $\vv$) if and only if $v_{ij} \geq v_{ij'}$ for every voter $i \in \calN$. Then, a voting rule $f$ is {\em ordinal} if for any two valuation profiles $\vv$ and $\vv'$ such that $\boldsymbol{\succ}^\vv = \boldsymbol{\succ}^{\vv'}$, it holds that $f(\vv) = f(\vv')$. 

There is a plethora of ordinal voting rules. Out of all these, we will be interested in the most widely used such rule, known as \emph{Plurality Voting} (PV, for short), which selects the alternative with the most first position appearances in the ordinal preferences of the voters. Besides its simplicity, the importance of this voting rule also comes from the fact that it is used extensively in practice. For instance, it is used in presidential elections in a number of countries like the USA and the UK.

\begin{definition}[Plurality Voting] \label{def:PV}
Given a valuation profile $\vv = (\vv_1, ..., \vv_\eta)$ with $\eta$ voters and its induced ordinal preference profile $\boldsymbol{\succ}^\vv$, $\PV$ elects an alternative in $\arg\max_{j \in \calM} |i \in \calN: j \succ_i^\vv j', \forall j' \in \calM \setminus \{j\}|$.
\end{definition} 

Two very important properties that are satisfied by most natural voting rules (including RV and PV) is unanimity and Pareto efficiency. A voting rule $f$ is \emph{unanimous} if whenever all voters agree that an alternative is the best among all others, then this alternative is elected. Formally, whenever there exists an alternative $j \in \calM$ for whom $v_{ij} \geq v_{ij'}$ for all voters $i \in \calN$ and alternatives $j' \in \calM \setminus\{j\}$, then $f(\vv) = j$. A voting rule $f$ is \emph{(strictly) Pareto efficient} if whenever all voters agree that an alternative $j$ is better than $j'$, then $j'$ cannot be elected. Formally, if $v_{ij} > v_{ij'}$ for all $i \in \calN$, then $f(\vv) \neq j'$.\footnote{We remark that Pareto efficiency usually requires that there is no other alternative who all voters \emph{weakly prefer} and who one voter \emph{strictly prefers}. For our lower bounds however, using the definition of strict Pareto efficiency is sufficient; actually, it makes our bounds even stronger. Also note that when the valuations do not exhibit ties (and therefore the induced preference orderings are strict), the two definitions coincide.}

It is not hard to observe that for any voting rule $f$ that is not Pareto efficient, we can construct a Pareto efficient rule $f'$ such that $\SW(f'(\vv)) \geq \SW(f(\vv))$, for every valuation profile $\vv$. 
In particular, for every input on which $f$ outputs a Pareto efficient alternative, $f'$ outputs the same alternative. For every input on which $f$ outputs an alternative that is not Pareto efficient, $f'$ outputs a maximal Pareto improvement, that is, a Pareto efficient alternative which all voters (weakly) prefer more than the alternative chosen by $f$. Also, Note also that Pareto efficiency implies unanimity. 
Therefore, in our lower bound proofs, we will use both of these properties without loss of generality.

\subsection{An example}\label{subsec:example}
Let us present an illustrative example to fully understand all aspects of our model. Consider an instance with $m=3$ alternatives $\calM = \{a,b,c\}$, $n=7$ voters $\calN = \{1, ..., 7\}$, and the profiles $\vv$ and $\boldsymbol{\succ}^\vv$ described in Table~\ref{tab:unrestricted-example}. According to these valuations, we have $\SW(a | \vv) = 3.9$, $\SW(b | \vv)=1.7$ and $\SW(c | \vv)=1.4$. Hence alternative $a$ is the one that maximizes the social welfare of the voters. 

\begin{table}[t]
\centering
\setlength{\tabcolsep}{4.5pt}
\begin{tabular}{ c c c c }
\noalign{\hrule height 1pt}\hline
 voter   & $a$ & $b$ & $c$ \\
 \hline
$1$ & $0.3$ & $0.5$ & $0.2$  \\
$2$ & $0.4$ & $0.1$ & $0.5$  \\
$3$ & $0.4$ & $0.1$ & $0.5$  \\

$4$ & $1$ & $0$ & $0$ \\
$5$ & $1$ & $0$ & $0$ \\

$6$ & $0.4$ & $0.5$ & $0.1$  \\
$7$ & $0.4$ & $0.5$ & $0.1$  \\

\noalign{\hrule height 1pt}\hline
\end{tabular}
\ \ \ \ \
\begin{tabular}{ c c }
\noalign{\hrule height 1pt}\hline
voter & ranking \\
\hline
$1$ & $b \succ_1 a \succ_1 c$  \\
$2$ & $c \succ_2 a \succ_2 b$  \\
$3$ & $c \succ_3 a \succ_3 b$   \\

$4$ & $a \succ_4 b \succ_4 c$ \\
$5$ & $a \succ_5 b \succ_5 c$\\

$6$ & $b \succ_6 a \succ_6 c$  \\
$7$ & $b \succ_7 a \succ_7 c$  \\

\noalign{\hrule height 1pt}\hline
\end{tabular}
\caption{The valuation profile and its corresponding ordinal profile in the example given in \cref{subsec:example}.}
\label{tab:unrestricted-example}
\end{table}

Before we define a particular district-based election, let us examine how RV and PV behave when there is a single district. By definition RV elects the optimal alternative $\RV(\vv)=a$ and achieves distortion $\dist(\RV)=1$. On the other hand, PV elects alternative $\PV(\vv)=b$ as $b$ has three first position appearances in the ordinal preferences of the voters, compared to the two first appearances of the other two alternatives. Hence PV achieves distortion $\dist(\PV) = \frac{\SW(a | \vv)}{\SW(b | \vv)} = \frac{3.9}{1.7} \approx 2.29$.

Now consider a district-based election in which the voters are partitioned into $k=3$ districts $\calD = \{ d_1 = \{1,2,3\}, d_2 = \{4,5\}, d_3 = \{6,7\} \}$ and the weights of the districts are given by the vector $\mathbf{w}=(3,2,2)$. Hence, the winner of district $d_1$ gets a weight of $3$, while the winners of districts $d_2$ and $d_3$ get a weight of $2$; then, the winner of the election is the alternative with the highest total weight. 

Let us see how RV and PV behave now. Since the welfare of the voters in $d_1$ is $1.1$ for $a$, $0.7$ for $b$, and $1.2$ for $c$, we have that the winner in $d_1$ using RV is $\RV(\vv_{d_1})=c$. Similarly, we can see that $\RV(\vv_{d_2})=a$ and $\RV(\vv_{d_3})=b$. Consequently, the overall winner under RV is $c$ and the distributed distortion is $\gdist(\RV) = \frac{3.9}{1.4} \approx 2.78$. Looking at the ordinal preferences of the voters in the districts, we can also see that the outcome is exactly the same for PV as well. This shows that the distributed distortion of a voting rule is inherently larger than its distortion. Our goal in the upcoming sections will be to quantify how worst it actually is, and also reveal any possible relation between the two distortion notions.

\section{The Effect of Districts for General Voting Rules}\label{sec:generalrules}
Our aim in this section is to showcase the immediate effect of using districts to distributively aggregate votes on the quality of the chosen outcome. To this end, we present tight bounds on the distributed distortion of all voting rules.  We will first state a general theorem relating the distributed distortion $\gdist(f)$ to the distortion $\dist(f)$ of $f$.
The main proof strategy of the following theorem is to carefully exploit the definition of $\dist(f)$ to bound the social welfare of both the district-based election winner and that of the optimal alternative. 

\begin{theorem}\label{thm:gamma-upper-bound}
Let $f$ be a voting rule with $\dist(f)=\gamma$. Then, its distributed distortion $\gdist_k(f)$ is at most
\begin{itemize}
\item[\em (i)] $\gamma + \frac{\gamma^2 mk}{\gamma+1}$ for symmetric elections;
\item[\em (ii)] $\gamma + \frac{\gamma^2 m}{\gamma+1} \left(\frac{n+\max_{d \in \calD}n_d}{\min_{d \in \calD}n_d}-1 \right)$ for unweighted elections;
\item[\em (iii)] $\gamma + \gamma m \left( \frac{n}{\min_{d \in \calD}n_d} -1 \right)$ for unrestricted elections.
\end{itemize}
\end{theorem}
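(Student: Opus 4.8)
The plan is to fix an arbitrary district-based election $\calE = (\calM, \calN, \calD, \mathbf{w}, \vv, f)$ and, letting $j^* \in \arg\max_j \SW(j \mid \vv)$ be the optimal alternative and $w = j(\calE)$ the winner of the distributed election, bound the ratio $\SW(j^* \mid \vv)/\SW(w \mid \vv)$ from above in each of the three regimes. The key asymmetry to exploit is that within each district $d$, the local winner $j_d = f(\vv_d)$ satisfies, by definition of $\dist_{n_d}(f) \le \dist(f) = \gamma$, the inequality $\SW(j_d \mid \vv_d) \ge \frac{1}{\gamma}\,\SW(j \mid \vv_d)$ for every alternative $j$; in particular $\SW(j_d \mid \vv_d) \ge \frac{1}{\gamma}\,\SW(j^* \mid \vv_d)$. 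Summing over districts where $j^*$ is \emph{not} the local winner, and over districts where it \emph{is}, I would split $\SW(j^* \mid \vv) = \sum_d \SW(j^* \mid \vv_d)$ according to which alternative won locally, and then use the unit-sum normalization (so $\SW(j \mid \vv_d) \le n_d$ for any $j$, and $\SW(j^* \mid \vv_d) \le \gamma\,\SW(j_d \mid \vv_d)$ in every district) to convert everything into statements about the weighted approval scores that decide the global winner.

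First I would set up the accounting for the numerator: partition $\calD = \calD_{j^*} \cup (\calD \setminus \calD_{j^*})$, where $\calD_{j^*}$ is the set of districts in which $j^*$ wins locally. For $d \in \calD_{j^*}$ we bound $\SW(j^* \mid \vv_d) \le n_d$ trivially; for $d \notin \calD_{j^*}$ we bound $\SW(j^* \mid \vv_d) \le \gamma\,\SW(j_d \mid \vv_d) \le \gamma n_d$. The crux is then to bound the denominator $\SW(w \mid \vv_d)$ from below using the fact that $w$, being the global winner, has weighted approval score at least as large as that of $j^*$: $\sum_{d \in \calD_w} w_d \ge \sum_{d \in \calD_{j^*}} w_d$, where $\calD_w$ is the set of districts $w$ wins. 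In each district $d \in \calD_w$, $w = j_d$, so $\SW(w \mid \vv_d) = \SW(j_d \mid \vv_d) \ge \frac{1}{\gamma}\SW(j^*\mid\vv_d)$, and more usefully $\SW(w \mid \vv_d) \ge \frac{1}{\gamma+1}\cdot(\text{something involving }n_d)$ — here I would use that within a district where $j_d$ wins, $\SW(j_d \mid \vv_d) + \SW(j^* \mid \vv_d) \le$ (sum of all valuations for the two alternatives, at most $n_d$ by unit-sum if $m \ge 2$), combined with $\SW(j^* \mid \vv_d) \le \gamma \SW(j_d \mid \vv_d)$, to get $\SW(j_d \mid \vv_d) \ge \frac{1}{\gamma+1}(\text{mass of }j^*\text{'s region})$ — I expect the clean statement to be $\SW(w \mid \vv_d) \ge \frac{\SW(j^* \mid \vv_d)}{\gamma}$ and also $\SW(w \mid \vv_d) \ge \frac{n_d}{\gamma+1} \cdot \frac{1}{m}$ type bounds after averaging over alternatives, which is where the $\gamma^2/(\gamma+1)$ and the factor $m$ in (i)–(ii) come from, while (iii), having no size/weight constraints, absorbs this more crudely into the $\gamma m$ coefficient.

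For the symmetric case (i), all $n_d = n/k$ and all $w_d = 1$, so the weighted-score inequality becomes $|\calD_w| \ge |\calD_{j^*}|$, and I would get $\SW(w \mid \vv) \ge \sum_{d \in \calD_w}\SW(w\mid\vv_d) \ge \frac{|\calD_w|}{m(\gamma+1)}\cdot\frac{n}{k}$ by the per-district lower bound applied to at least one district and averaging, then compare against $\SW(j^* \mid \vv) \le |\calD_{j^*}|\cdot\frac{n}{k} + \gamma(k - |\calD_{j^*}|)\cdot\frac{n}{k}$; routine algebra with $|\calD_w|\ge|\calD_{j^*}|$ should yield the bound $\gamma + \frac{\gamma^2 mk}{\gamma+1}$. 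The unweighted case (ii) follows the same template but tracks the largest and smallest district sizes explicitly — replacing $n/k$ upper bounds by $\max_d n_d$ and the lower bound on $\SW(w)$ by a term with $\min_d n_d$ in the denominator — which produces the $\frac{n + \max_d n_d}{\min_d n_d} - 1$ factor. The unrestricted case (iii) uses only $\sum_{d\in\calD_w} w_d \ge \sum_{d\in\calD_{j^*}} w_d$ together with the crudest bounds $\SW(j^*\mid\vv_d)\le\gamma\SW(w'\mid\vv_d)$ and unit-sum, giving the $\frac{n}{\min_d n_d} - 1$ factor with coefficient $\gamma m$.

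The main obstacle I anticipate is getting the per-district lower bound on $\SW(w \mid \vv_d)$ sharp enough: naively $\SW(w \mid \vv_d)$ could be as small as $0$ in a district $w$ does not win, so the argument must carefully combine the \emph{weighted count} inequality (which districts $w$ wins, by weight) with a lower bound that holds only in the districts $w$ \emph{does} win, and then relate the total weight of those districts back to district sizes — trivial when weights and sizes are uniform (case i), but in cases (ii) and (iii) one must worry that $w$ wins many small districts while $j^*$ wins few large ones, which is exactly why the $\max_d n_d / \min_d n_d$ ratio appears. Pinning down that worst-case configuration and verifying the constant $\frac{\gamma^2}{\gamma+1}$ (as opposed to, say, $\gamma^2$ or $\gamma$) is the delicate part; I would handle it by writing $\SW(j^* \mid \vv_d) \le \gamma\,\SW(j_d \mid \vv_d)$ and $\SW(j^*\mid\vv_d) + \SW(j_d \mid \vv_d) \le \max_j\SW(j\mid\vv_d) + \SW(j_d\mid\vv_d) \le 2\SW(j_d\mid\vv_d)\gamma$ is too weak — rather $\SW(j^*\mid\vv_d)\le n_d - \SW(j_d\mid\vv_d)$ fails too; the right move is $\SW(j^*\mid\vv_d)\le\min\{\gamma\SW(j_d\mid\vv_d),\ n_d\}$ and then in the regime where the first is binding, $\SW(j_d\mid\vv_d)\ge\SW(j^*\mid\vv_d)/\gamma$, while globally $\sum_d\SW(j^*\mid\vv_d)$ over the won-by-$w$ districts is at most $\gamma\SW(w\mid\vv)$, and this plus the weight inequality closes the loop.
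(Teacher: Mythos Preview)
Your plan matches the paper's approach in spirit: bound the numerator $\SW(j^*\mid\vv)$ district by district, bound the denominator $\SW(w\mid\vv)$ from below via the districts $w$ wins, and compare using $|\calD_w|\ge|\calD_{j^*}|$ (or the weighted analogue). But two concrete pieces are not nailed down, and without them you will not land on the stated constants.

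First, the per-district lower bound on $\SW(w\mid\vv_d)$ for $d\in\calD_w$ is $n_d/(\gamma m)$, not $n_d/((\gamma+1)m)$. The argument is simply: for every alternative $j$, $\SW(w\mid\vv_d)\ge\frac{1}{\gamma}\SW(j\mid\vv_d)$; summing over all $m$ alternatives and using unit-sum gives $m\,\SW(w\mid\vv_d)\ge\frac{1}{\gamma}\sum_j\SW(j\mid\vv_d)=\frac{n_d}{\gamma}$. The $\gamma+1$ you wrote there is a slip; it belongs in the \emph{other} bound.

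Second, and more importantly, your upper bound on the numerator must use a \emph{three-way} split of $\calD$, not the two-way split $\calD_{j^*}\cup(\calD\setminus\calD_{j^*})$ you start with. The paper (and the constants in the theorem) require distinguishing (a) districts in $\calD_w$, where $\SW(j^*\mid\vv_d)\le\gamma\,\SW(w\mid\vv_d)$ --- this is what produces the leading additive $\gamma$; (b) districts in $\calD_{j^*}$, where you only have $\SW(j^*\mid\vv_d)\le n_d$; and (c) districts in neither, where a third alternative $c=j_d\notin\{w,j^*\}$ wins and hence $\SW(j^*\mid\vv_d)+\SW(c\mid\vv_d)\le n_d$ together with $\SW(j^*\mid\vv_d)\le\gamma\,\SW(c\mid\vv_d)$ yields $\SW(j^*\mid\vv_d)\le\frac{\gamma}{\gamma+1}n_d$. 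Your proposal has all three inequalities scattered across the text, but treats the $\frac{\gamma}{\gamma+1}$ step as part of lower-bounding $\SW(w)$ rather than upper-bounding $\SW(j^*)$ over category (c); if you apply it uniformly to all of $\calD\setminus\calD_{j^*}$ you lose the sharper bound on $\calD_w$, and if you only use the two-way split the additive $\gamma$ disappears. Once the three-way decomposition is made explicit, the algebra using $|\calD_{j^*}|\le|\calD_w|$ and $|\calD_w|\ge 1$ is indeed routine and gives exactly the bounds claimed; for part (iii) the paper drops category (c) and uses the cruder $\SW(j^*\mid\vv_d)\le n_d$ outside $\calD_w$, which is why the $\gamma+1$ denominator disappears there.
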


\begin{proof}
We prove the first two parts together, and the third one separately.

\paragraph{Parts (i) and (ii).}
Consider a district-based unweighted election $\calE$ with a set $\calM$ of $m$ alternatives, a set $\calN$ of $n$ voters, a set $\calD$ of $k$ districts such that each district $d$ consists of $n_d$ voters (if the election is symmetric, then $n_d=n/k$) and has weight $w_d=1$. Let $\vv$ be the valuation profile consisting of the valuations of all voters for all alternatives. 
Let $a=j(\calE)$ be the winner of the election and denote by $A \subseteq \calD$ the set of districts in which $a$ wins according to $f$. 
Then, we can lower-bound the social welfare of $a$ by the total value that the voters in $A$ have for $a$:
\begin{align}  \label{eq:RV-winner-unweighted-1}
\SW(a | \vv) = \sum_{i \in \calN}v_{ia} \geq \sum_{i: d(i) \in A} v_{ia}.
\end{align}
By the definition of the local distortion $\dist_{n_{d(i)}}(f)$ and the distortion $\dist(f)=\gamma$ we have that
\begin{align*}
\frac{ \sum_{i: d(i) \in A} v_{ij} }{ \sum_{i: d(i) \in A} v_{ia}} \leq \dist_{n_{d(i)}}(f) \leq \gamma 
\Leftrightarrow
\sum_{i: d(i) \in A}  v_{ia} \geq \frac{1}{\gamma}\sum_{i: d(i) \in A} v_{ij}
\end{align*}
for every $j \in \calM$. Hence, summing this inequality over all $j \in \calM$, and using the unit-sum assumption according to which $\sum_{j \in \calM} v_{ij} =1$ for every voter $i \in \calN$, we have that
\begin{align*}
\sum_{i: d(i) \in A}  v_{ia} 
\geq \frac{1}{\gamma m} \sum_{j \in \calM} \sum_{i: d(i) \in A} v_{ij} 
= \frac{1}{\gamma m} \sum_{i: d(i) \in A} \sum_{j \in \calM} v_{ij} 
= \frac{1}{\gamma m} \sum_{d \in A} n_d
\end{align*}
Combining this inequality with the fact that $\sum_{d \in A} n_d \geq |A| \cdot \min_{d \in \calD}n_d$, we obtain
\begin{align}
\sum_{i: d(i) \in A}  v_{ia} \geq \frac{1}{\gamma m} \cdot \sum_{d \in A} n_d \geq \frac{1}{\gamma m} \cdot |A| \cdot \min_{d \in \calD}n_d. \label{eq:RV-winner-unweighted-2}
\end{align}

Let $b$ be the optimal alternative, and denote by $B \subset \calD$ the set of districts in which $b$ is the winner. We split the social welfare of $b$ into three parts:
\begin{align*}
\SW(b | \vv) = \sum_{i: d(i) \in A} v_{ib} + \sum_{i: d(i) \in B} v_{ib} + \sum_{i: d(i) \not\in A \cup B} v_{ib}.
\end{align*}
We can now make the following observations:
\begin{itemize}
\item For the first part, since $a$ wins in the districts of $A$ according to $f$, by the definitions of $\dist_{n_{d(i)}}(f)$ and $\dist(f)=\gamma$, we have that 
$$\sum_{i: d(i) \in A} v_{ib} \leq \gamma \sum_{i: d(i) \in A} v_{ia}.$$ 

\item For the second part, since the value of each voter in $B$ for $b$ is by definition at most $1$, we have that
$$\sum_{i: d(i) \in B} v_{ib} \leq \sum_{d \in B} n_d.$$
 
\item For the third part, consider any district $d \not\in A \cup B$ and let $c \in \calM \setminus \{a,b\}$ be the winner in $d$ according to $f$. By the definition of $\dist(f)$ we have that
$$\sum_{i: d(i)=d}  v_{ic} \geq \frac{1}{\gamma}\sum_{i: d(i)=d} v_{ib}.$$
By the unit-sum assumption, we further have that
\begin{align*}
n_d \geq \sum_{i: d(i)=d}  v_{ib} + \sum_{i: d(i)=d}  v_{ic} \geq \left(1 + \frac{1}{\gamma} \right) \sum_{i: d(i)=d}  v_{ib}.
\end{align*} 
Adding over all districts $d \not\in A \cup B$ and rearranging terms gives us
$$\sum_{i: d(i) \not\in A \cup B} v_{ib} \leq \frac{\gamma}{\gamma+1} \sum_{d \not\in A \cup B} n_d.$$

\item Since $\gamma \geq 1$, $1 - \frac{\gamma}{\gamma+1} \leq \frac{\gamma}{\gamma+1}$.

\item Since $a$ is the election winner, $|B| \leq |A|$ and $|A| \geq 1$.
\end{itemize}
Putting all of these together, we upper-bound the social welfare of $b$ as follows:
\begin{align}\nonumber
\SW(b | \vv)  
&\leq \gamma \sum_{i:d(i) \in A}  v_{ia} + \sum_{d \in B} n_d + \frac{\gamma}{\gamma+1}\sum_{d \not\in A \cup B} n_d \\\nonumber
&= \gamma \sum_{i:d(i) \in A}  v_{ia} + \sum_{d \in B} n_d + \frac{\gamma}{\gamma+1}\left( n - \sum_{d \in A} n_d - \sum_{d \in B} n_d \right) \\ \nonumber
&= \gamma \sum_{i:d(i) \in A}  v_{ia} + \left(1 - \frac{\gamma}{\gamma+1}\right) \sum_{d \in B} n_d + \frac{\gamma}{\gamma+1}\left( n - \sum_{d \in A} n_d  \right) \\ \nonumber
&\leq\gamma \sum_{i:d(i) \in A}  v_{ia} + \frac{\gamma}{\gamma+1}\left(n +  \sum_{d \in B} n_d - \sum_{d \in A} n_d \right) \\\nonumber
&\leq \gamma \sum_{i:d(i) \in A}  v_{ia} + \frac{\gamma}{\gamma+1}\left(n +  |B| \max_{d \in \calD} n_d - |A| \min_{d \in \calD} n_d \right) \\\label{eq:RV-optimal-unweighted} 
&\leq \gamma \sum_{i:d(i) \in A}  v_{ia} + \frac{\gamma}{\gamma+1} \cdot |A| \cdot \left(n +  \max_{d \in \calD} n_d - \min_{d \in \calD} n_d \right)
\end{align}
Hence, by \eqref{eq:RV-winner-unweighted-1}, \eqref{eq:RV-winner-unweighted-2} and \eqref{eq:RV-optimal-unweighted}, we obtain
\begin{align*}
\gdist_k(f) 
&= \frac{\SW(b | \vv)}{\SW(a | \vv)}  \\
&\leq \frac{\gamma \sum_{i: d(i) \in A}  v_{ia}  + \frac{\gamma}{\gamma+1} \cdot |A| \cdot \left(n +  \max_{d \in \calD} n_d - \min_{d \in \calD} n_d \right)}{\sum_{i: d(i) \in A}  v_{ia} }  \\
&\leq \gamma + \frac{\gamma^2 m}{\gamma+1} \left(\frac{n+\max_{d \in \calD}n_d}{\min_{d \in \calD}n_d}-1 \right). 
\end{align*}
The proof of part (ii) is now complete. For part (i), we get the desired bound of $\gamma + \frac{\gamma^2 mk}{\gamma+1}$ by simply setting $\min_{d \in \calD}n_d = \max_{d \in \calD}n_d = n/k$.

\paragraph{Part (iii).}
Observe that the proof of part (iii) does not follow directly from the proof of part (ii) since now that the districts may have arbitrary weights, the number of districts that the election winner $a$ wins does not need to be higher than the number of districts in which $b$ is the winner. In other words, it might be the case that $|B| > |A|$.  
However, since $|A| \geq 1$, inequality \eqref{eq:RV-winner-unweighted-2} can be simplified to 
\begin{align}\label{eq:RV-winner-unrestricted-2}
\sum_{i: d(i) \in A}  v_{ia} \geq \frac{1}{\gamma m} \cdot \min_{d \in \calD}n_d.
\end{align}
For the optimal alternative $b$ we can also simplify our arguments by using the trivial fact that all voters not in districts of $A$ have by definition value at most $1$ for $b$. Then, we obtain
\begin{align}\nonumber
\SW(b | \vv) 
&= \sum_{i:d(i) \in A} v_{ib} + \sum_{i:d(i) \not\in A}v_{ib} \\\nonumber 
&\leq \gamma \sum_{i:d(i) \in A}  v_{ia} + \sum_{d \not\in A}n_d \\\nonumber 
&= \gamma \sum_{i:d(i) \in A}  v_{ia} + n - \sum_{d \in A}n_d \\\label{eq:RV-optimal-unrestricted}
&\leq \gamma \sum_{i:d(i) \in A}  v_{ia} + n - \min_{d \in \calD}n_d.
\end{align}
By combining \eqref{eq:RV-winner-unweighted-1}, \eqref{eq:RV-winner-unrestricted-2} and \eqref{eq:RV-optimal-unrestricted}, we finally have that
\begin{align*}
\gdist_k(f) 
&= \frac{\SW(b | \vv)}{\SW(a | \vv)} \\  
&\leq \frac{\gamma \sum_{i: d(i) \in A}  v_{ia}  + n - \min_{d \in \calD}n_d}{\sum_{i: d(i) \in A}  v_{ia} } \\ 
&\leq \gamma + \gamma m \left( \frac{n}{\min_{d \in \calD}n_d}-1 \right). 
\end{align*}
This completes the proof. 
\end{proof}

We now turn to concrete voting rules and consider Range Voting, which is the most natural rule for social welfare maximization. By the definition of the rule we have that $\dist(\RV)=1$, and therefore \cref{thm:gamma-upper-bound} immediately implies the following corollary.

\begin{corollary}\label{thm:RV-upper-bound}
The distributed distortion $\gdist(\RV)$ of $\RV$ in district-based elections is at most
\begin{itemize}
\item[\em (i)] $1 + \frac{mk}{2}$ for symmetric elections;
\item[\em (ii)] $1 + \frac{m}{2} \left(\frac{n+\max_{d \in \calD}n_d}{\min_{d \in \calD}n_d}-1 \right)$ for unweighted elections;
\item[\em (iii)] $1 + m \left( \frac{n}{\min_{d \in \calD}n_d} -1 \right)$ for unrestricted elections.
\end{itemize}
\end{corollary}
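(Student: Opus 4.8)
The plan is to obtain this corollary as an immediate instantiation of \cref{thm:gamma-upper-bound}, so the only thing that needs to be checked from scratch is that Range Voting has single-district distortion exactly $1$. For any voting rule $f$ and any profile $\vv$ we trivially have $\SW(f(\vv)\mid\vv)\leq\max_{j\in\calM}\SW(j\mid\vv)$, hence $\dist_\eta(f)\geq 1$ for every $\eta$. For $\RV$, \cref{def:RV} forces the elected alternative to lie in $\arg\max_{j\in\calM}\SW(j\mid\vv)$, so $\SW(\RV(\vv)\mid\vv)=\max_{j\in\calM}\SW(j\mid\vv)$ and the ratio defining $\dist_\eta(\RV)$ equals $1$ on every profile; taking the supremum over $\calV^\eta$ gives $\dist_\eta(\RV)=1$ for all $\eta$, and in particular $\dist(\RV)=\dist_n(\RV)=1$.

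With $\gamma\coloneqq\dist(\RV)=1$ in hand, I would simply substitute $\gamma=1$ into the three bounds of \cref{thm:gamma-upper-bound}. Part (i) gives $\gamma+\frac{\gamma^2 mk}{\gamma+1}=1+\frac{mk}{2}$; part (ii) gives $\gamma+\frac{\gamma^2 m}{\gamma+1}\big(\frac{n+\max_{d\in\calD}n_d}{\min_{d\in\calD}n_d}-1\big)=1+\frac{m}{2}\big(\frac{n+\max_{d\in\calD}n_d}{\min_{d\in\calD}n_d}-1\big)$; and part (iii) gives $\gamma+\gamma m\big(\frac{n}{\min_{d\in\calD}n_d}-1\big)=1+m\big(\frac{n}{\min_{d\in\calD}n_d}-1\big)$. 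These coincide exactly with the three claimed expressions, which settles the corollary.

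There is essentially no obstacle here: all of the substantive work — the decomposition of the optimal alternative's welfare over the district sets $A$, $B$, and $\calD\setminus(A\cup B)$, the use of the unit-sum normalization, and the comparison $|B|\leq|A|$ in the unweighted case versus its absence in the unrestricted case — has already been discharged in the proof of \cref{thm:gamma-upper-bound}. Accordingly, I would present the proof of \cref{thm:RV-upper-bound} as a one- or two-line remark: verify $\dist(\RV)=1$ and plug $\gamma=1$ into \cref{thm:gamma-upper-bound}.
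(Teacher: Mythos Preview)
Your proposal is correct and matches the paper's approach exactly: the paper simply notes that $\dist(\RV)=1$ by definition and invokes \cref{thm:gamma-upper-bound} with $\gamma=1$, which is precisely what you do.
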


We continue by presenting matching lower bounds on the distortion of any voting rule in a district-based election. The high-level idea in the proof of the following theorem is that the election winner is chosen arbitrarily among the alternatives with the highest weight, which might lead to the cardinal information within the districts to be lost.   

\begin{theorem}\label{thm:unanimous-lower-bound}
The distributed distortion of all voting rules is at least 
\begin{itemize}
\item[\em (i)] $1 + \frac{mk}{2}$ for symmetric elections;
\item[\em (ii)] $1 + \frac{m}{2} \left(\frac{n+\max_{d \in \calD}n_d}{\min_{d \in \calD}n_d}-1 \right)$ for unweighted elections;
\item[\em (iii)] $1 + m \left( \frac{n}{\min_{d \in \calD}n_d} -1 \right)$ for unrestricted elections.
\end{itemize}
\end{theorem}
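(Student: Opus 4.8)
The plan is to exhibit, for an arbitrary voting rule $f$ and any admissible configuration (number of districts $k$, district sizes, and---for the unrestricted class---weights), a district-based election whose ratio matches the corresponding bound of \cref{thm:RV-upper-bound}. The construction must damage even rules that see the full cardinal profile, so the inefficiency cannot be created inside a single local election; it has to come from two sources only: the local winner carrying merely its ``fair share'' $n_d/m$ of a district's total value, and the aggregation step, which discards all cardinal information and which we set up so that an unfavourable tie is broken against us. The heart of the construction is a single \emph{critical} district $d^{\ast}$ of the smallest admissible size $s=\min_{d}n_d$ in which every voter values every alternative equally at $1/m$. Whatever alternative $a=f(\vv_{d^{\ast}})$ the rule returns on this completely symmetric subprofile, we have $\SW(a\mid\vv_{d^{\ast}})=s/m$, and the same holds for every other alternative. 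We then pick some $b\neq a$ to be the optimum and build the remaining districts so that (i) $a$ ends up as the overall winner and (ii) $\SW(b\mid\vv)$ is as large as possible, while every voter outside $d^{\ast}$ gives $a$ value $0$, so that $\SW(a\mid\vv)$ stays equal to $s/m$. All auxiliary districts are filled so that their local winner is unanimously top-ranked there, which forces $f$ to pick it by Pareto efficiency (assumed without loss of generality for lower bounds); thus no reasoning about $f$ is needed beyond its behaviour on $d^{\ast}$.

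For the \emph{unrestricted} class this already suffices: assign $d^{\ast}$ a huge weight $W$, and split the remaining $n-s$ voters into the other $k-1$ districts (each of size at least $s$, which is possible since the configuration has $k$ districts, so $n\ge ks$) with every voter there valuing $b$ at $1$. Then $b$ wins those $k-1$ districts but $a$ wins overall because $W$ dominates; we get $\SW(a\mid\vv)=s/m$ and $\SW(b\mid\vv)=s/m+(n-s)$, hence ratio $1+m(n/s-1)$, matching part (iii). In the weighted-by-count classes (unweighted and symmetric) we cannot inflate a weight, so $a$ must become the winner by carrying at least as many districts as everyone else. Here I would let $d^{\ast}$ be the only district $a$ carries, devote one district of the \emph{largest} admissible size to $b$ (all its voters value $b$ at $1$, so $b$ carries it with full welfare $\max_d n_d$), and in each remaining district make every voter split its unit value almost evenly, $\tfrac12-\varepsilon$ to $b$ and $\tfrac12+\varepsilon$ to a fresh ``dummy'' alternative that thereby wins it, so that $b$ collects (almost) half of each such district's mass. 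In the limit $\varepsilon\to 0$ this gives $\SW(a\mid\vv)=s/m$ and $\SW(b\mid\vv)=\tfrac{s}{m}+\max_d n_d+\tfrac12\bigl(n-s-\max_d n_d\bigr)$, which rearranges to $1+\tfrac{m}{2}\bigl(\tfrac{n+\max_d n_d}{s}-1\bigr)$, that is, part (ii); setting all sizes to $n/k$ collapses this to $1+\tfrac{mk}{2}$, that is, part (i). One checks that in this instance every alternative carries exactly one district, so there is a $k$-way tie in the aggregation, broken in favour of $a$.

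The step I expect to be the real obstacle is making $a$ the aggregate winner cleanly: the instance above needs a distinct dummy winner for each of the leftover districts, which is only possible when there are at least as many alternatives as districts. When there are more districts than alternatives one has to reuse alternatives, let $a$ carry several identical copies of the critical district, and rebalance the ``half-mass'' districts so that no alternative strictly out-carries $a$ while $\SW(b\mid\vv)/\SW(a\mid\vv)$ still reaches the claimed value; keeping the ratio at $1+\tfrac{mk}{2}$ through this rebalancing (rather than letting it degrade to an $m^2$-type quantity) is the delicate point, and is where I would expect to spend most of the effort. Everything else is routine: unit-sum accounting of $\SW(a\mid\vv)$ and $\SW(b\mid\vv)$, checking feasibility of the chosen district sizes (automatic, since a configuration realizing $\min_d n_d=s$ and $\max_d n_d$ with $k$ districts summing to $n$ already forces $(k-1)s+\max_d n_d\le n$), and verifying that the intended dummy really is the Pareto-efficient---hence elected---alternative in each auxiliary district.
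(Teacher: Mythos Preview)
Your proposal is correct and follows essentially the same construction as the paper: a ``critical'' smallest district where every alternative gets welfare $s/m$, one largest district unanimously for $b$, and the remaining districts each carried by a distinct dummy alternative that beats $b$ by $\varepsilon$; the unrestricted case uses a single heavy-weight district instead of the tie. The only cosmetic difference is that the paper perturbs the critical district to $1/m+\varepsilon$ for $a$ (so unanimity forces $a$), whereas you leave it exactly uniform and let $a$ be whatever $f$ outputs---both work. Your worry about the regime with more districts than alternatives is not an oversight on your part: the paper simply assumes $m>k$ in the construction (the distributed distortion is a supremum over elections, so one instance suffices), so you need not ``rebalance'' anything there.
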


\begin{proof}
We prove the first two parts together and the third one separately.

\paragraph{Parts (i) and (ii).} 
Consider an unweighted district-based election with a set of districts $\calD=\{d_1, ..., d_k\}$ such that $m > k$. District $d_\ell$ consists of $n_\ell$ voters for $\ell \in [k]$. 
We will define the valuations of the voters such that there are $k$ different district winners $\{a, b, c_3, ..., c_k\}$. Then, without loss of generality, the election winner is one of these alternatives, say $a$.
Let $\varepsilon \in (0,1/m)$. We define the following valuation profile $\vv$:
\begin{itemize}
\item all voters in district $d_1$ have value $1/m+\varepsilon$ for $a$ and $1/m-\frac{\varepsilon}{m-1}$ for every other alternative; 
\item all voters in district $d_2$ have value $1$ for $b$ and $0$ for everyone else;
\item all voters in district $d_\ell$ for $\ell \geq 3$ have value $1/2+\varepsilon$ for $c_\ell$, $1/2-\varepsilon$ for $b$ and $0$ for everyone else. 
\end{itemize}
Note that since the voting rule is unanimous without loss of generality, the winner of the first district is $a$, the winner of the second district is $b$ and the winner of district $d_\ell$ for $\ell \geq 3$ is $c_\ell$, as desired.

The optimal alternative is $b$ with 
$$\SW(b|\vv) = \left(\frac{1}{m} - \frac{\varepsilon}{m-1}\right) n_1 + n_2 + \left( \frac{1}{2}-\varepsilon \right)(n-n_1-n_2),$$
while the winner of the election $a$ has 
$$\SW(a|\vv) = \left(\frac{1}{m} + \varepsilon\right) n_1.$$ 
As $\varepsilon$ tends to zero, the ratio $\SW(b)/\SW(a)$ becomes
$$
\frac{\frac{1}{m}n_1 + n_2 + \frac{1}{2}(n-n_1+n_2)}{\frac{1}{m}n_1} = 1 + \frac{m}{2}\cdot \left(\frac{n+n_2}{n_1}-1\right).
$$
The bounds follow by setting $n_1=\min_{d \in \calD}n_d$ and $n_2 = \max_{d \in \calD}n_d$ for unweighted elections, and $n_1 = n_2 = n/k$ for symmetric elections. 

\paragraph{Part (iii).} 
For the unrestricted case, consider a district-based election in which there is a district $d^* \in \calD$ with weight $w_{d^*} > \sum_{d \in \calD\setminus \{ d^* \}} w_d$. Since $d^*$ has so much weight, the winner of this district is also the election winner. Let $a$ and $b$ be two distinguished alternatives, and $\varepsilon \in (0,1/m)$. We define the following valuation profile $\vv$:
\begin{itemize}
\item all voters in district $d^*$ have value $1/m+\varepsilon$ for $a$ and $1/m-\frac{\varepsilon}{m-1}$ for every other alternative; 
\item all voters in each district $d \in \calD \setminus \{d^*\}$ have value $1$ for $b$ and $0$ for everyone else. 
\end{itemize}
Since the voting rule is unanimous without loss of generality, $a$ is the winner in district $d^*$ and $b$ is the winner in every other district.

The optimal alternative is $b$ with 
$$\SW(b | \vv) = \left(\frac{1}{m} - \frac{\varepsilon}{m-1}\right) n_{d^*} + n - n_{d^*},$$
while the election winner is alternative $a$ with 
$$\SW(a | \vv) = \left(\frac{1}{m} + \varepsilon \right) n_{d^*}.$$ 
As $\varepsilon$ tends to zero, the ratio $\SW(b)/\SW(a)$ becomes
$$
\frac{\frac{1}{m} n_{d^*} + n- n_{d^*}}{\frac{1}{m} n_{d^*}} = 1 + m\left(\frac{n}{n_{d^*}}-1 \right),
$$
and the proof follows by setting $n_{d^*} =\min_{d \in \calD}n_d$.
\end{proof}

Note that we can avoid any tie-breaking issues in parts (i) and (ii), by slightly modifying the instances used in the above proof. Specifically, for symmetric elections, we can define the valuations so that the optimal alternative loses in all districts, which yields a lower bound of $1 + \frac{m(k-1)}{2}$. For unweighted elections, we can create two small districts (instead of just one) in which the winner is alternative $a$; this yields a lower bound of $1 + \frac{m}{2} \left( \frac{n+\max_{d \in \calD}n_d}{2\min_{d \in \calD}n_d}-2 \right)$. Furthermore, observe that for unweighted and unrestricted elections the worst case occurs when there are only two districts such that $\min_{d \in \calD}n_d=1$ and $\max_{d \in \calD}n_d=n-1$; then, we obtain a lower bound of $1 + m(n-1)$ in both cases. Such modifications can also be applied to the lower-bound instances given in the upcoming sections. However, we avoid presenting our lower bounds in this way to simplify our discussion.

\section{Ordinal voting rules and Plurality}\label{sec:ordinalrules}

Even though Range Voting is quite natural, its documented drawback is that it requires a very detailed informational structure from the voters, making the elicitation process rather complicated and subject to strategic behavior. For this reason, most voting rules that have been applied in practice are ordinal, as such rules present the voters with the much less demanding task of reporting a preference ordering over the alternatives, rather than actual numerical values. 

Hence, we now turn our attention to ordinal voting rules, and start our investigation with the most simple and widely used such rule, Plurality Voting. It is known that the distortion $\dist(\PV)$ of PV is $\Theta(m^2)$ \citep{Caragiannis2011embedding,caragiannis2017subset}. Therefore, by plugging in this number to our general bound in \cref{thm:gamma-upper-bound}, we obtain corresponding upper bounds for PV, which are rather large; for example, the bounds are $O(m^3k)$ for symmetric elections, and $O(m^3n)$ for unweighted elections. However, by taking advantage of the structure of the voting rule, we are able to obtain much better and tight bounds.

\begin{theorem}\label{thm:PV-upper}
The distributed distortion $\gdist_k(\emph{PV})$ of $\PV$ is \emph{exactly}
\begin{itemize}
\item[\em (i)] $1 + \frac{3m^2k}{4}$ for symmetric elections;

\item[\em (ii)] $1 + \frac{m^2}{4} \left(\frac{3n+\max_{d \in \calD}n_d}{\min_{d \in \calD}n_d}-1 \right)$ for unweighted elections;

\item[\em (iii)] $1 + m^2 \left( \frac{n}{\min_{d \in \calD}n_d} - \frac{1}{2} \right)$ for unrestricted elections.
\end{itemize}
\end{theorem}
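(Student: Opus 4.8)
The plan is to prove the upper and the lower bound separately; the upper bound refines the generic argument behind \cref{thm:gamma-upper-bound} by exploiting that a Plurality winner of a district collects at least a $1/m$ fraction of its first-place votes, while the lower bound engineers exactly the configuration that makes that refinement tight.

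\textbf{Upper bound.} I would first isolate a few elementary per-district facts. Fix a district $d$ with $n_d$ voters and write $q_j^d$ for the number of voters in $d$ ranking $j$ first. (a) If $w$ is the PV winner of $d$, then $q_w^d \geq n_d/m$ and each such voter values $w$ at least $1/m$ (its top value under the unit-sum normalisation), so $\SW(w \mid \vv_d) \geq n_d/m^2$. (b) For any $j$, bounding the value of voters ranking $j$ first by $1$ and of all others by $1/2$ gives $\SW(j \mid \vv_d) \leq (n_d + q_j^d)/2$; in particular, if $j$ is not the PV winner then $q_j^d \leq n_d/2$, so $\SW(j \mid \vv_d) \leq 3n_d/4$. (c) If $w$ is the PV winner and $j \neq w$, then partitioning the voters of $d$ into those topped by $w$, those topped by $j$, and the rest, and using $v_{ij} \leq v_{iw}$ on the first group together with $q_j^d \leq q_w^d$, yields $\SW(j \mid \vv_d) \leq \sum_{i \in d:\, w \text{ is } i\text{'s top}} v_{iw} + n_d/2$.

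Now let $a = j(\calE)$ with winning district set $A$, and let $b$ be the optimal alternative with winning set $B$. Fact (a) gives $\SW(a \mid \vv) \geq \sum_{d \in A} n_d/m^2 \geq |A| \min_{d} n_d / m^2$. For $\SW(b \mid \vv)$ I split the sum over $A$, over $B$, and over the remaining districts. On $A$ I apply (c), and the decisive step is that $\sum_{d \in A} \sum_{i \in d:\, a \text{ is top}} v_{ia} \leq \SW(a \mid \vv)$, so after dividing by $\SW(a\mid\vv)$ this portion contributes only an additive $1$ (this replaces the $\Theta(m)$ loss one would incur by instead bounding each $\SW(b\mid\vv_d)$ via $(n_d+q_a^d)/2$, and is what pins down the leading constant). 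On $B$ I use $\SW(b \mid \vv_d) \leq n_d$; on the rest I use the $3n_d/4$ bound of (b). Collecting terms, using $|B| \leq |A|$ for symmetric and unweighted elections (there $a$ wins at least as many unit-weight districts as $b$), and $\sum_{d \in B} n_d \leq |B| \max_d n_d$, $\sum_{d \in A} n_d \geq |A| \min_d n_d$, I obtain $\SW(b \mid \vv) \leq \SW(a \mid \vv) + \tfrac{3}{4}n + \tfrac{1}{4}|A|(\max_d n_d - \min_d n_d)$; dividing by the lower bound on $\SW(a\mid\vv)$ gives an expression decreasing in $|A|$, so setting $|A| = 1$ (and $\min_d n_d = \max_d n_d = n/k$ for the symmetric case) yields (i) and (ii). For (iii) the inequality $|B| \leq |A|$ may fail, so I instead bound $\SW(b \mid \vv_d) \leq n_d$ on every district outside $A$, getting $\SW(b \mid \vv) \leq \SW(a \mid \vv) + n - \tfrac{1}{2}\sum_{d \in A} n_d$, and then dividing and minimising over $\sum_{d \in A} n_d \geq \min_d n_d$ gives the claimed value.

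\textbf{Lower bound.} For each case I would exhibit a family of elections whose distortion tends to the stated quantity as $\varepsilon \to 0$. The core is one ``critical'' district $d^\star$ of size $\min_d n_d$, partitioned into $m$ equal groups, the $\ell$-th group ranking a distinct alternative $j_\ell$ first, with $j_1 = a$, $j_2 = b$, and $j_3,\dots,j_m$ auxiliary (so $m \geq k$ is assumed in the symmetric and unweighted cases). In the $a$-group each voter has $v_{ia} = 1/m + \varepsilon$, $v_{ib} = 1/m$, and the others just below $1/m$; in the $b$-group $v_{ib} = 1$; in group $\ell \geq 3$, $v_{ij_\ell} = 1/2 + \varepsilon$ and $v_{ib} = 1/2 - \varepsilon$. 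All $m$ alternatives tie as plurality winner of $d^\star$ (broken toward $a$, or arranged exactly by adding one voter to the $a$-group), and $\SW(a \mid \vv_{d^\star}) \to n_{d^\star}/m^2$ while $\SW(b \mid \vv_{d^\star}) \to n_{d^\star}/m^2 + n_{d^\star}/2$; the extra $n_{d^\star}/m^2$, coming precisely from $v_{ib} = 1/m$ in the $a$-group, is what produces the additive $1$ in the final bound. For (iii), $d^\star$ receives weight exceeding the total weight of all other districts, whose voters all value $b$ at $1$; for (i) and (ii) I additionally use one district of size $\max_d n_d$ whose voters all value $b$ at $1$ (so $b$ wins it), and $k-2$ further districts each split evenly between a distinct auxiliary $c_\ell$ (with $v_{ic_\ell} = 1/2 + \varepsilon$, $v_{ib} = 1/2 - \varepsilon$) and $b$ (with $v_{ib} = 1$), so each $c_\ell$ wins its district by tie-breaking while $b$ harvests three quarters of those voters. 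Then $a$, $b$ and the $c_\ell$ each win exactly one unit-weight district, the election-level tie is broken toward $a$, one checks $b$ is optimal, and $\SW(b\mid\vv)/\SW(a\mid\vv)$ tends to the values in (i)--(iii) (with $n_{d^\star} = n/k$ in the symmetric case).

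\textbf{Main difficulty.} The real work is matching the constants exactly rather than up to lower-order terms. The straightforward per-district estimate $\SW(b \mid \vv_d) \leq (n_d + q_a^d)/2$ on districts won by $a$ loses a factor of order $m/2$ and only delivers $\tfrac{m}{2} + \tfrac{3m^2k}{4}$ for symmetric elections; the remedy --- charging the $a$-topped voters' value for $b$ against their value for $a$ and accounting for it globally against $\SW(a\mid\vv)$ instead of district-by-district --- must be mirrored on the lower-bound side by the choice $v_{ib} = 1/m$ in the $a$-group of $d^\star$. Verifying that the two sides meet, and that the district- and election-level tie-breaks, the condition $m \geq k$, and appropriate divisibility of the district sizes can all be arranged simultaneously while keeping $b$ optimal, is where the care lies; the remaining manipulations follow the template of \cref{thm:gamma-upper-bound} and \cref{thm:unanimous-lower-bound}.
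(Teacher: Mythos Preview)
Your proposal is correct and follows essentially the same argument as the paper: the per-district facts (a)--(c) match the paper's inequalities (in particular your fact (c) is exactly the paper's bound $\sum_{i:d(i)=d} v_{ib} \leq \sum_{i:d(i)=d} v_{ia} + n_d/2$ on districts won by $a$), the three-way split of $\SW(b\mid\vv)$ over $A$, $B$, and the remaining districts is identical, and the lower-bound constructions coincide (a critical district split into $m$ equal plurality groups, one $b$-unanimous district, and $k-2$ half-$c_\ell$/half-$b$ districts). The only cosmetic differences are that you carry $\SW(a\mid\vv)$ rather than $\sum_{i:d(i)\in A} v_{ia}$ through the ratio and use $\varepsilon$-perturbations where the paper invokes exact ties resolved via Pareto efficiency.
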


\begin{proof}
We prove the upper and the lower bounds separately, starting with the former.

\paragraph{Upper bounds.}
Consider a district-based unweighted election $\calE$ with a set $\calM$ of $m$ alternatives, a set $\calN$ of $n$ voters, a set $\calD$ of $k$ districts such that each district $d \in \calD$ consists of $n_d$ voters and has weight $w_d=1$. Let $\vv = (\vv_i)_{i \in \calN}$ be the valuation profile consisting of the valuations of all voters for all alternatives, which induces the ordinal preference profile $\boldsymbol{\succ}^\vv = (\succ_i^\vv)_{i \in \calN}$. 
To simplify our discussion, let $\calN_d(j)$ be the set of voters in district $d$ that rank alternative $j$ at the first position, and also set $n_d(j)=|\calN_d(j)|$. 

Let $a=j(\calE)$ be the winner of the election and denote by $A \subseteq \calD$ the set of districts in which $a$ wins according to PV. 
Then, we clearly have that
\begin{align}\label{eq:PV-winner-unweighted-1}
\SW(a | \vv) = \sum_{i \in \calN}v_{ia} \geq \sum_{i: d(i) \in A} v_{ia}.
\end{align}
For each voter $i \in \calN_d(a)$ we have that $v_{ia} \geq v_{ij}$ for every $j \in \calM$. 
By the unit-sum assumption, this implies that $v_{ia} \geq \frac{1}{m}$.
Furthermore, since $a$ has the plurality of votes in each district $d \in A$, we have that $n_d(a) \geq n_d(j)$ for every $j \in \calM$.
By the fact that $\sum_{j \in \calM}n_d(j)=n_d$, we obtain that $n_d(a) \geq \frac{n_d}{m}$. 
We also have that $\sum_{d \in A}n_d \geq |A| \cdot \min_{d \in \calD}n_d$.
Therefore,  
\begin{align}
\sum_{i: d(i) \in A}  v_{ia} 
&\geq \sum_{d \in A} \sum_{i \in \calN_d(a)} v_{ia} 
\geq \frac{1}{m} \cdot \sum_{d \in A}n_d(a) 
\geq \frac{1}{m^2} \sum_{d \in A}n_d 
\geq \frac{1}{m^2} \cdot |A| \cdot \min_{d \in \calD}n_d. \label{eq:PV-winner-unweighted-2}
\end{align}

Let $b$ the optimal alternative, and denote by $B \subset \calD$ the set of districts in which $b$ is the winner. We split the social welfare of $b$ into the following three parts:
\begin{align}\label{eq:PV-opt-to-be-bounded}
\SW(b | \vv) = \sum_{i:d(i)\in A}v_{ib} + \sum_{i:d(i)\in B}v_{ib} + \sum_{i:d(i) \not\in A \cup B} v_{ib}.
\end{align}
In what follows, we will bound each term of the above sum individually. First consider a district $d \in A$. Then, the welfare of the voters in $d$ for $b$ can be written as
\begin{align*}
\sum_{i:d(i) = d} v_{ib} = \sum_{i \in \calN_d(a)}v_{ib} + \sum_{i \in \calN_d(b)} v_{ib} + \sum_{i \not\in \calN_d(a) \cup \calN_d(b)} v_{ib}.
\end{align*}
Since $a$ is the favorite alternative of every voter $i \in \calN_d(a)$, $v_{ib} \leq v_{ia}$. 
By definition, the value of every voter $i \in \calN_d(b)$ for $b$ is at most $1$. 
The value of every voter $i \not\in \calN_d(a) \cup \calN_d(b)$ for $b$ can be at most $1/2$ since otherwise $b$ would be the favorite alternative of such a voter. Combining these observations, we get
\begin{align*}
\sum_{i:d(i) = d} v_{ib} 
&\leq \sum_{i \in \calN_d(a)}v_{ia} + n_d(b) + \frac{1}{2}\sum_{j \neq a,b}n_d(j) \\
&= \sum_{i \in \calN_d(a)}v_{ia} + \frac{1}{2} n_d(b) + \frac{1}{2}\sum_{j \neq a}n_d(j) \\
&= \sum_{i \in \calN_d(a)}v_{ia} + \frac{1}{2} n_d(b) + \frac{1}{2} \bigg( n_d - n_d(a) \bigg) \\
&\leq \sum_{i: d(i)=d} v_{ia} + \frac{1}{2} n_d(a) + \frac{1}{2} \bigg( n_d - n_d(a) \bigg) \\
&= \sum_{i: d(i)=d} v_{ia} + \frac{1}{2} n_d, 
\end{align*}
where the inequality follows by considering the value of all voters in $d$ for alternative $a$ (not only the value of the voters that rank $a$ first), as well as by the fact that $a$ wins $b$ by plurality, and thus $n_d(b) \leq n_d(a)$. By summing over all districts in $A$, we can bound the first term of \eqref{eq:PV-opt-to-be-bounded} as follows:
\begin{align}\label{eq:A-bound-PV}
\sum_{i:d(i)\in A}v_{ib} \leq \sum_{i: d(i) \in A} v_{ia} + \frac{1}{2} \sum_{d \in A} n_d.
\end{align}
For the second term of \eqref{eq:PV-opt-to-be-bounded}, by definition we have that the value of each voter in the districts of $B$ for alternative $b$ can be at most $1$, and therefore
\begin{align*}
\sum_{i:d(i)\in B}v_{ib} \leq \sum_{d \in B}n_d.
\end{align*}
For the third term of \eqref{eq:PV-opt-to-be-bounded}, observe that the total value of the voters in a district $d \not\in A \cup B$ for $b$ must be at most $\frac{3}{4}n_d$; otherwise $b$ would be ranked first in strictly more than half of the ordinal preferences of the voters and therefore win in the district. Hence,
\begin{align*}
\sum_{i:d(i)\not\in A \cup B}v_{ib} \leq \frac{3}{4}\sum_{d \not\in A \cup B}n_d.
\end{align*} 
By substituting the bounds for the three terms of \eqref{eq:PV-opt-to-be-bounded},and by taking into account the facts that $|B| \leq |A|$ and $|A| \geq 1$, we can finally upper-bound the social welfare of $b$ as follows:
\begin{align}\nonumber
\SW(b | \vv)  
&\leq \sum_{i:d(i) \in A}  v_{ia} + \frac{1}{2}\sum_{d \in A}n_d + \sum_{d \in B} n_d + \frac{3}{4}\sum_{d \not\in A \cup B} n_d \\\nonumber
&= \sum_{i:d(i) \in A}  v_{ia} + \frac{1}{4}\left(3n +  \sum_{d \in B} n_d - \sum_{d \in A} n_d \right) \\\nonumber
&\leq \sum_{i:d(i) \in A}  v_{ia} + \frac{1}{4}\left(3n +  |B| \cdot \max_{d \in \calD} n_d - |A| \cdot \min_{d \in \calD} n_d \right) \\\label{eq:PV-optimal-unweighted} 
&\leq \sum_{i:d(i) \in A}  v_{ia} + \frac{1}{4} \cdot |A| \cdot \left(3n +  \max_{d \in \calD} n_d - \min_{d \in \calD} n_d \right)
\end{align}

By \eqref{eq:PV-winner-unweighted-1}, \eqref{eq:PV-winner-unweighted-2} and \eqref{eq:PV-optimal-unweighted}, we can upper-bound the distributed distortion of PV as follows: 
\begin{align*}
\gdist_k(\PV) 
&= \frac{\SW(b | \vv)}{\SW(a | \vv)}  \\
&\leq \frac{\sum_{i: d(i) \in A} v_{ia}  + \frac{1}{4} \cdot |A| \cdot \left(3n +  \max_{d \in \calD} n_d - \min_{d \in \calD} n_d \right)}{\sum_{i: d(i) \in A}  v_{ia} }  \\
&\leq 1 + \frac{m^2}{4} \left(\frac{3n+\max_{d \in \calD}n_d}{\min_{d \in \calD}n_d}-1 \right). 
\end{align*}
This completed the proof of part (ii). For part (i), we get the desired bound of $1 + \frac{3m^2k}{4}$ by simply setting $\min_{d \in \calD}n_d = \max_{d \in \calD}n_d = n/k$.

For part (iii), Since $|A| \geq 1$, we simplify inequality \eqref{eq:PV-winner-unweighted-2} to 
\begin{align}\label{eq:PV-winner-unrestricted-2}
\sum_{i: d(i) \in A}  v_{ia} \geq \frac{1}{m^2} \cdot \min_{d \in \calD}n_d.
\end{align}
For the optimal alternative $b$ we also simplify our arguments by using the trivial fact that all voters not in districts of $A$ have by definition value at most $1$ for $b$. Then, by also using inequality \eqref{eq:A-bound-PV}, we obtain
\begin{align} \nonumber
\SW(b | \vv) &= \sum_{i:d(i) \in A} v_{ib} + \sum_{i:d(i) \not\in A}v_{ib} \\ \nonumber
&\leq \sum_{i:d(i) \in A}  v_{ia} + \frac{1}{2} \sum_{d \in A} n_d + \sum_{d \not\in A}n_d \\\nonumber 
&= \sum_{i:d(i) \in A}  v_{ia} + n - \frac{1}{2}\sum_{d \in A}n_d \\\label{eq:PV-optimal-unrestricted}
&\leq \sum_{i:d(i) \in A}  v_{ia} + n - \frac{1}{2}\min_{d \in \calD}n_d.
\end{align}
By combining \eqref{eq:PV-winner-unweighted-1}, \eqref{eq:PV-winner-unrestricted-2} and \eqref{eq:PV-optimal-unrestricted}, we finally have that
\begin{align*}
\gdist_k(\PV) 
&= \frac{\SW(b | \vv)}{\SW(a | \vv)}  \\
&\leq \frac{\sum_{i: d(i) \in A}  v_{ia}  + n - \frac{1}{2}\min_{d \in \calD}n_d}{\sum_{i: d(i) \in A}  v_{ia} }  \\
&\leq 1 + m^2 \left( \frac{n}{\min_{d \in \calD}n_d}- \frac{1}{2} \right). 
\end{align*}
This completes the proof of the upper bounds. 

\paragraph{\bf Lower bounds.} 
We now provide matching lower bounds. For unweighted districts, consider a district-based election with a set of districts $\calD=\{d_1, ..., d_k\}$ such that $m > k$, district $d_\ell$ consists of $n_\ell$ voters for $\ell \in [k]$, and $n_1$ is a multiple of $m$.
We enumerate the alternatives as $\calM = \{j_1, ..., j_m\}$.  
We will define the valuations of the voters for the alternatives with the goal of having $k$ different district winners $\{a, b, c_3 ..., c_k\}$, where $a = j_{m-1}$ and $b = j_m$. 
Then, the election winner is one of these district winners, say $a$.

We define the approval votes and the valuation profile $\vv$ of the voters as follows: 
\begin{itemize}
\item The voters in district $d_1$ are split into $m$ sets $S_1$, ..., $S_m$ of size $n_1/m$ each, such that the voters of set $S_i$ approve alternative $j_i$. The consistent valuations are such that the voters in set $S_i$, $i \in [m-2]$ have value $1/2$ for $j_i$ and $b$, the voters in set $S_{m-1}$ have value $1/m$ for all alternatives, and the voters in set $S_m$ have value $1$ for $b$.

\item The voters in district $d_2$ all approve alternative $b$ and have value $1$ for her.

\item The voters in district $d_\ell$ for $\ell \geq 3$ are split into two sets of equal size $n_\ell/2$ such that the voters in the first set approve alternative $c_\ell$ and the voters in the second set approve $b$. The voters in the first set have value $1/2$ for both $c_\ell$ and $b$, while the voters in the second set have value $1$ for $b$. 
\end{itemize}
Since PV is Pareto efficient, we can assume without loss of generality that the ties are resolved in favor of the alternatives that we want to be the winners in the districts; that is, $a$ wins district $d_1$, $b$ wins $d_2$, and $c_\ell$ wins $d_\ell$ for $\ell \geq 3$.

The optimal alternative is $b$ with 
\begin{align*}
\SW(b | \vv) 
&= \left((m-2) \cdot \frac{1}{2} + \frac{1}{m} + 1\right) \frac{n_1}{m} + n_2 + \frac{3}{4} \sum_{\ell=1}^k n_\ell \\
&= \frac{n_1}{m^2} + \frac{1}{4}(3n - n_1 + n_2).
\end{align*}
while the winner of the election $a$ has 
$$\SW(a | \vv) = \frac{1}{m} \cdot \frac{n_1}{m} = \frac{n_1}{m^2}.$$ 
Therefore, the distributed distortion is equal to 
$$
\frac{\SW(b | \vv)}{\SW(a | \vv)} = \frac{\frac{n_1}{m^2} + \frac{1}{4}(3n-n_1+n_2)}{\frac{n_1}{m^2}} = 1 + \frac{m^2}{4}\cdot \left(\frac{3n+n_2}{n_1}-1\right).
$$
The bound follows by selecting $n_1=\min_{d \in \calD}n_d$ and $n_2 = \max_{d \in \calD}n_d$. For part (i), we simply set $n_1 = n_2 = n/k$.

\medskip

For the unrestricted case, consider a general election with $k$ districts such that there is a district $d^* \in \calD$ with weight $w_{d^*} > \sum_{d \in \calD\setminus \{ d^* \}} w_d$. Since $d^*$ has such a larger weight, the winner of this district is the election winner as well.
We enumerate the alternatives as $M = \{c_1, ..., c_{m-2},a,b \}$. 
We define the approval votes and the valuation profile $\vv$ of the voters as follows: 
\begin{itemize}
\item District $d^*$:
$\frac{n_{d^*}}{m}$ voters approve $a$ and have value $\frac{1}{m}$ for all alternatives; 
$\frac{n_{d^*}}{m}$ voters approve $b$ and have value $1$ for her;
$\frac{n_{d^*}}{m}$ voters approve alternative $c_i$ for $i \in [m-2]$, and have value $1/2$ for $c_i$ and $b$. 
We assume without loss of generality that the winner in this district is $a$, since PV is Pareto efficient.

\item District $d \in \calD \setminus \{d^*\}$: all voters approve $b$ and have value $1$ for her. 
\end{itemize}
The optimal alternative is $b$ with 
\begin{align*}
\SW(b | \vv) &= \frac{n_{d^*}}{m}\left(\frac{1}{m} + 1 + (m-2)\frac{1}{2}\right) + n - n_{d^*} \\
&= \frac{n_{d^*}}{m^2} + n - \frac{n_{d^*}}{2}.
\end{align*}
while the election winner is alternative $a$ with 
$$\SW(a | \vv) = \frac{n_{d^*}}{m} \cdot \frac{1}{m} = \frac{n_{d^*}}{m^2}.$$ 
Therefore, the distributed distortion is equal to 
$$
\frac{\SW(b | \vv)}{\SW(a | \vv)} = \frac{\frac{n_d^*}{m^2} + n - \frac{n_{d^*}}{2}}{\frac{n_{d^*}}{m^2}} = 1 + m^2 \cdot \left(\frac{n}{n_{d^*}}- \frac{1}{2} \right).
$$
The proof follows by selecting $n_{d^*} =\min_{d \in \calD}n_d$.
\end{proof}

We conclude this section by show that PV is asymptotically the best possible voting rule among all deterministic ordinal voting rules. 

\begin{theorem}\label{thm:ordinal-lower}
The distributed distortion $\gdist_k(f)$ of any deterministic ordinal voting rule $f$ is
\begin{itemize}
\item[\em (i)] $\Omega(m^2k)$ for symmetric elections;
\item[\em (ii)] $\Omega\left( m^2 \frac{n + \max_{d \in \calD}n_d}{\min_{d \in \calD}n_d}\right)$ for unweighted elections;
\item[\em (iii)] $\Omega\left( \frac{m^2 n}{\min_{d \in \calD}n_d}\right)$ for unrestricted elections.
\end{itemize}
\end{theorem}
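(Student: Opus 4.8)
The plan is to prove each part by exhibiting, for an arbitrary deterministic ordinal rule $f$, a family of district-based elections whose distributed distortion approaches the stated quantity; the matching upper bounds for Plurality in \cref{thm:PV-upper} tell us these are the right targets, so the natural route is to reuse the skeleton of the lower-bound instances in \cref{thm:PV-upper} and make them robust to the behaviour of $f$. The one conceptual ingredient needed is that, since $f$ is ordinal, the local winner $f(\vv_d)$ of a district depends only on the induced ordinal profile of $d$. Hence I would first fix the ordinal structure of one carefully designed district, let $f$ reveal which alternative it elects there, and only \emph{afterwards} choose cardinal valuations — consistent with that fixed ordinal profile — that make precisely the elected alternative have vanishing social welfare. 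In effect, a generic ordinal rule can be forced to behave like Plurality-with-adversarial-tie-breaking once the ``tie'' is baked into a symmetric district.

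Concretely, for the unweighted and symmetric cases I would take $k$ districts with $m>k$. In a small district $d_1$ of $n_1$ voters (a multiple of $m$) I would place $m$ equal groups over $\calM=\{a_1,\dots,a_m\}$, where group $i$ has the cyclic strict order $a_i\succ a_{i+1}\succ\cdots\succ a_{i-1}$; whatever alternative $a:=f(\vv_{d_1})$ the rule elects is the top choice of exactly a $1/m$ fraction of $d_1$, so I can pick valuations in which the voters of the group headed by $a$ give $a$ value essentially $\tfrac1m$ (spreading the remaining mass, about $\tfrac1m$ each, over the others in the order forced by their cyclic ranking) while \emph{every other group} gives $a$, and everything it ranks below $a$, a vanishing value. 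This forces $\SW(a | \vv_{d_1})\to n_1/m^2$, and with the remaining districts giving $a$ value $0$ it forces $\SW(a | \vv)\to n_1/m^2$. Once $a$ is known I would choose distinct alternatives $b,c_3,\dots,c_k$ different from $a$ (possible since $m>k$), let every voter of $d_2$ rank $b$ first with value $1$, and every voter of $d_\ell$ ($\ell\ge3$) have the strict order $c_\ell\succ b\succ\cdots$ with $v(c_\ell)=\tfrac12+\varepsilon$ and $v(b)=\tfrac12-\varepsilon$; since $f$ may be taken unanimous (hence Pareto efficient) by the discussion in \cref{sec:prelim}, the winners of these districts are forced to be $b,c_3,\dots,c_k$, so all $k$ weight-$1$ districts have distinct winners and the tie may be broken to $a$. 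Letting $\varepsilon\to0$ gives $\SW(b | \vv)\ge n_2+\tfrac12\sum_{\ell\ge3}n_\ell$, so the distortion approaches $\tfrac{m^2}{2}\big(\tfrac{n+n_2}{n_1}-1\big)$; taking $n_1=\min_{d\in\calD}n_d$, $n_2=\max_{d\in\calD}n_d$ gives part (ii), and $n_d=n/k$ gives part (i). For part (iii) no unanimity assumption is needed: I would take a district $d^*$ whose weight exceeds the total of all others, run exactly the cyclic-group construction inside $d^*$ so that $\SW(j(\calE) | \vv)\to n_{d^*}/m^2$, and let every voter of every other district value a fixed alternative $b\ne j(\calE)$ at $1$; then $j(\calE)$ is the election winner, $\SW(b | \vv)\ge n-n_{d^*}$, and the distortion approaches $m^2\big(\tfrac{n}{n_{d^*}}-1\big)$, which with $n_{d^*}=\min_{d\in\calD}n_d$ is $\Omega\!\left(m^2 n/\min_{d\in\calD}n_d\right)$ (for $k\ge2$ one automatically has $\min_{d\in\calD}n_d\le n/2$; the degenerate $k=1$ case is the known single-district $\Omega(m^2)$ bound for ordinal rules).

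The step I expect to be the crux is twofold and lives entirely in the small ``symmetric'' district. First, one must guarantee that the construction goes through no matter which alternative $f$ returns there; this is handled by the cyclic symmetrization (all $m$ alternatives are interchangeable) together with the fact that $f$, being ordinal, has already committed before the cardinal values are fixed. Second, one must actually check that the skewed, unit-sum valuations are consistent with the prescribed cyclic orders: for the group whose top choice is the elected alternative $a$ one needs a unit-sum vector whose maximum is on $a$, is about $\tfrac1m$, and whose remaining coordinates realize the required suborder; and for every other group one needs a unit-sum vector concentrating almost all its mass on the (at least one) alternatives ranked above $a$ while giving $a$ and everything below it a vanishing value, again respecting the order. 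This bookkeeping, done with $\varepsilon/\delta$ perturbations so that all induced preferences stay strict and exactly cyclic, is routine but is exactly what lets the winner's welfare be pushed down to $n_1/m^2$ rather than only $n_1/m$; the side conditions ($m>k$ to fit the distinct ``decoy'' winners, and $f$ unanimous without loss of generality for parts (i)–(ii)) are minor by comparison.
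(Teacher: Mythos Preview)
Your proposal is correct and follows essentially the same approach as the paper's proof: a cyclic-symmetric profile in a designated district to pin the rule's winner to welfare $\approx n_1/m^2$, decoy districts with pairwise distinct winners so that adversarial final-round tie-breaking can elect $a$, and large value for $b$ elsewhere. Two minor remarks: the implication in your parenthetical is reversed—\cref{sec:prelim} lets you assume $f$ is Pareto efficient, \emph{hence} unanimous (not the converse), though only unanimity is actually used in your argument; and for districts $d_\ell$, $\ell\ge 3$, the paper instead splits each into two equal halves (one ranking $c_\ell\succ b$, the other $b\succ c_\ell$) to squeeze a $3/4$ contribution to $\SW(b)$ rather than your $1/2$, but this affects only the constant, not the asymptotics.
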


\begin{proof}
Fix an arbitrary deterministic ordinal voting rule $f$; as we explained earlier in Section~\ref{sec:prelim}, we can assume without loss of generality that $f$ is Pareto efficient. 

\paragraph{Parts (i) and (ii).}
Consider a district-based election with a set of districts $\calD=\{d_1, ..., d_k\}$ such that $m > k$, district $d_\ell$ consists of $n_\ell$ voters for $\ell \in [k]$, and $n_1$ is an integer multiple of $m$.
We enumerate the alternatives as $M = \{j_1, ..., j_m\}$ and let $a=j_{m-1}$, $b =j_m$.  
We will construct an ordinal preference profile (and a consistent valuation profile) such that there are $k$ different district winners $\{a, b, c_3, ..., c_k\}$. Then, without loss of generality, one of these alternatives will be the election winner, say $a$. 

We now define the ordinal profile $\boldsymbol{\succ}$ and a valuation profile $\vv$ such that $\boldsymbol{\succ}^\vv =\boldsymbol{\succ}$: 
\begin{itemize}
\item The voters in district $d_1$ are partitioned into $m$ sets $S_1$, ..., $S_m$ of equal size $n_1/m$. 
The voters in set $S_i$ have the ranking $j_i \succ j_{i+1} \succ ... \succ j_m \succ j_1 \succ ... \succ j_{i-1}$. 
Since each alternative appears exactly the same number of times in each position and $f$ is Pareto efficient, any alternative can be selected as the winner of $d_1$; thus, without loss of generality, we may assume that the winner is $a$. The valuations are such that the voters in set $S_i$, $i \in [m-2] \cup \{m\}$ have value $1$ for alternative $j_i$, while the voters in set $S_{m-1}$ have value $1/m$ for all alternatives.

\item All voters in district $d_2$ rank alternative $b$ first and the other alternatives arbitrarily in the remaining positions. Clearly, $b$ is the winner of $d_2$ (since everyone prefers $b$ to any other alternative and $f$ is Pareto efficient). The valuations are such that all voters have value $1$ for $b$.

\item For each $\ell \in \{3, ..., k\}$, the voters in each of district $d_\ell$ are partitioned into two sets $X_\ell$ and $Y_\ell$ of equal size $n_\ell/2$. All voters in $X_\ell$ rank alternative $c_\ell$ first, alternative $b$ second, and then the remaining alternatives arbitrarily. All voters in $Y_\ell$ rank alternative $b$ first, alternative $c_\ell$ second, and then the remaining alternatives arbitrarily. By the fact that $f$ is Pareto efficient, the winner of district $d_\ell$ is either $c_\ell$ or $b$. Without loss of generality, we may assume that the tie is broken in favor of $c_\ell$. The valuations are such that the voters in $X_\ell$ have value $1/2$ for $c_\ell$ and $b$, while the voters in $Y_\ell$ have value $1$ for $b$. 
\end{itemize}
The optimal alternative $b$ has  
\begin{align*}
\SW(b | \vv) 
&= \left(\frac{1}{m} + 1\right) \frac{n_1}{m} + n_2 + \frac{3}{4} \sum_{\ell=3}^k n_\ell \\
&\geq \frac{n_1}{m^2} + \frac{1}{4}(3n - 3n_1 + n_2).
\end{align*}
On the other hand, the winner of the election $a$ has
\begin{align*}
\SW(a | \vv) = \frac{1}{m} \cdot \frac{n_1}{m} = \frac{n_1}{m^2}.
\end{align*}
Consequently, the distortion is 
\begin{align*}
\frac{\SW(b | \vv)}{\SW(a | \vv)} \geq \frac{\frac{n_1}{m^2} + \frac{1}{4}(3n-3n_1+n_2)}{\frac{n_1}{m^2}} = 1 + \frac{m^2}{4} \left(\frac{3n+n_2}{n_1}-3\right).
\end{align*}
The bounds follow by setting $n_1=\min_{d \in \calD}n_d$ and $n_2 = \max_{d \in \calD}n_d$ for unweighted elections, and $n_1 = n_2 = n/k$ for symmetric elections.

\paragraph{Part (iii).}
For the unrestricted case, consider a district-based election with $k$ districts such that there is a district $d^* \in \calD$ with weight $w_{d^*} > \sum_{d \in \calD\setminus \{ d^* \}} w_d$, and $n_{d^*}$ is an integer multiple of $m$. Since $d^*$ has such a large weight, the winner of this district is the election winner as well. 
We enumerate the alternatives as $M = \{j_1, ..., j_m\}$ and let $a=j_{m-1}$, $b =j_m$. 
We define the ordinal preferences of the voters and their consistent valuation profile $\vv$ as follows: 
\begin{itemize}
\item District $d^*$: the voters are partitioned into $m$ sets $S_1$, ..., $S_m$ of equal size $n_{d^*}/m$. 
The voters in set $S_i$ have the ranking $j_i \succ j_{i+1} \succ ... \succ j_m \succ j_1 \succ ... \succ j_{i-1}$. 
Since each alternative appears exactly the same number of times in each position and $f$ is Pareto efficient, any alternative can be selected as the winner of $d^*$; thus, without loss of generality, we may assume that the winner is $a$. The valuations are such that the voters in set $S_i$ for $i \in [m-2] \cup \{m\}$ have value $1$ for alternative $j_i$, while the voters in set $S_{m-1}$ have value $1/m$ for all alternatives.

\item District $d \in \calD \setminus \{d^*\}$: all voters rank alternative $b$ first and then the other alternatives arbitrarily. All voters have value $1$ for $b$. Hence, $b$ is the winner in all these districts. 
\end{itemize}
The optimal alternative is $b$, while the election winner is $a$.
Since 
\begin{align*}
\SW(b | \vv) &= \frac{n_{d^*}}{m}\left(\frac{1}{m} + 1  \right) + n - n_{d^*} \\
&\geq \frac{n_{d^*}}{m^2} + n - n_{d^*}
\end{align*}
and 
\begin{align*}
\SW(a | \vv) = \frac{n_{d^*}}{m} \cdot \frac{1}{m} = \frac{n_{d^*}}{m^2},
\end{align*} 
the distortion is equal to 
\begin{align*}
\frac{\SW(b | \vv)}{\SW(a | \vv)} = \frac{\frac{n_d^*}{m^2} + n - n_{d^*}}{\frac{n_{d^*}}{m^2}} = 1 + m^2 \left(\frac{n}{n_{d^*}}- 1 \right).
\end{align*}
The proof follows by selecting $n_{d^*} =\min_{d \in \calD}n_d$.
\end{proof}

\section{An Experimental Demonstration}\label{sec:experiments}
Thus far, we have studied the worst-case effect of the partition of voters into districts on the distortion of voting rules. In this section, we further showcase this phenomenon with indicative simulations, by using real-world utility profiles that are drawn from the Jester dataset~\citep{jester}, which consists of ratings of 100 different jokes in the interval $[-10,10]$ by approximately 70,000 users; this dataset has been used in a plethora of previous papers, including the seminal work of \citet{Boutilier}. Following their methodology, we build instances with a set of alternatives that consists of the eight most-rated jokes. For various values of $k$, we execute $1000$ independent simulations as follows: we select a random set of $100$ users among the ones that evaluated all eight alternatives, rescale their ratings so that they are non-negative and satisfy the unit-sum assumption, and then divide them into $k$ districts. 

For the partition into districts, we consider both {\em random} partitions as well as {\em bad} partitions in terms of distortion. For the construction of the latter, for each instance consisting of a specific value of $k$ and a set of voters, we create $100$ random partitions of the voters into $k$ districts, simulate the general election (based on the voting rules we consider) and then keep the partition with maximum distortion.  

We compare the average distortion of four rules: Range Voting, Plurality, Borda, and Harmonic. Borda and Harmonic are two well-known positional scoring rules defined by the scoring vectors $(m-1, m-2, ..., 0)$ and $(1, 1/2, ..., 1/m)$, respectively. According to these rules, each voter assigns points to the alternatives based on the positions she ranks them, and the alternative with the most points is the winner; Plurality can also be defined similarly by the scoring vector $(1,0,...,0)$. 

\begin{figure}[t]
\centering
\begin{subfigure}{0.45\textwidth}
   \centering
   \includegraphics[scale=0.45]{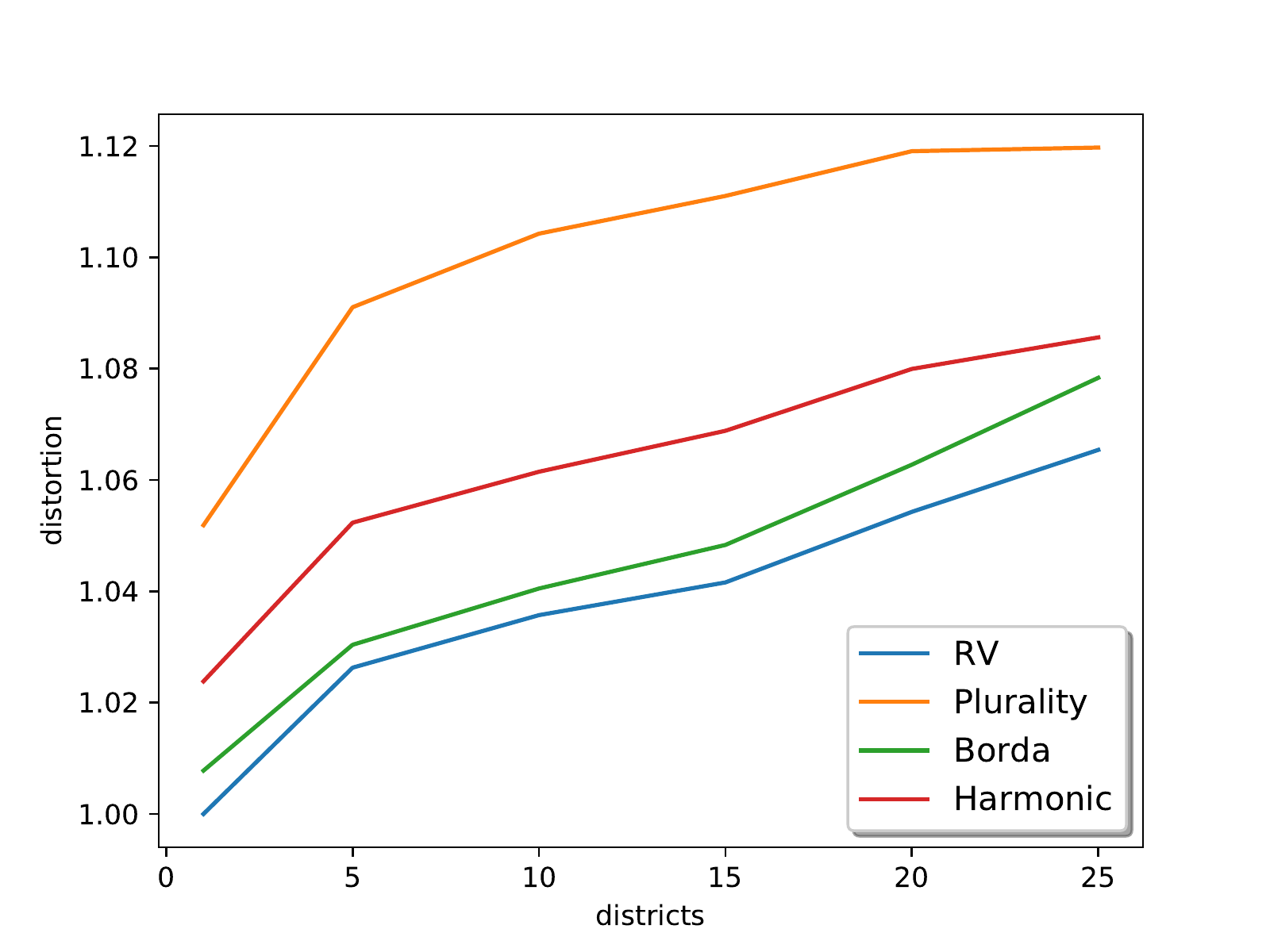}
   \caption{Unweighted}
\end{subfigure}
\begin{subfigure}{0.45\textwidth}
   \centering
   \includegraphics[scale=0.45]{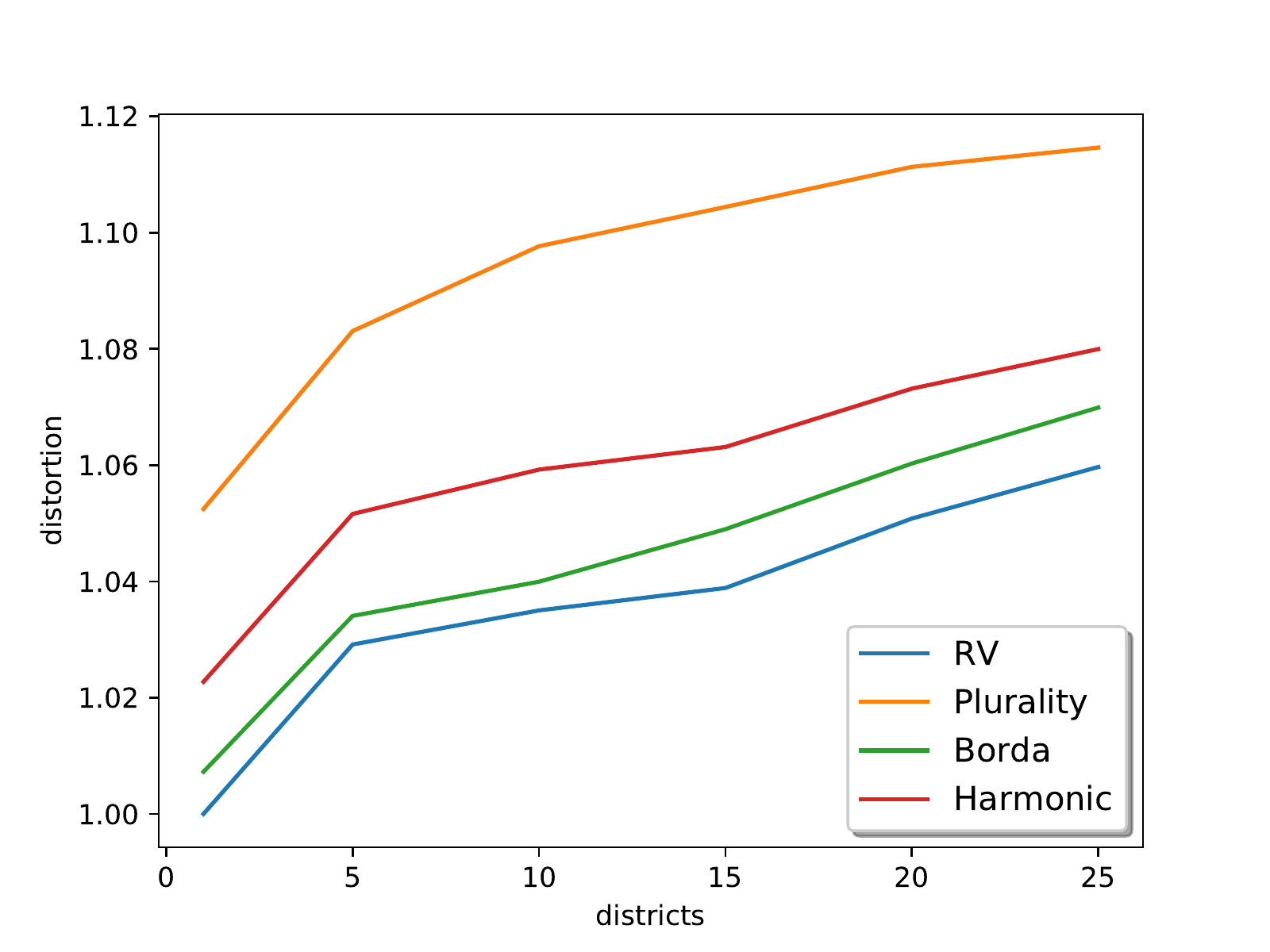}
   \caption{Weighted}
\end{subfigure}
\caption{Average distributed distortion from $1000$ simulations as a function of the number of districts $k$ with random partitions of voters into districts.}
\label{fig:jester-random}
\end{figure} 

\cref{fig:jester-random} depicts the results of our simulations for unweighted and weighted districts when the partition into districts is random and $k \in \{1,5,10,15,20,25\}$. As one can observe, the behaviour of the four voting rules is very similar in both cases, and it is evident that as the number of districts increases, the distortion increases as well. For instance, the distortion of Plurality increased by $3.71\%$ for $k=5$ compared to $k=1$ (i.e., when there are no districts) and by $6.44\%$ for $k=25$; these values are similar for the other rules as well, although a bit lower. 
\cref{tab:jester-worst} contains the results of our simulations for unweighted and weighted districts when the partition into districts is bad (in terms of the distortion) and $k \in \{1,2,3,4,5\}$. As in the case of random districts, we can again observe that the distortion increases as $k$ increases, but now the difference between the cases with districts ($k \geq 2)$ and without districts ($k=1$) is more clear; the distortion is almost five times higher.

\begin{table}[t]
\centering
\setlength{\tabcolsep}{4.5pt}
\begin{tabular}{l c ccccc c ccccc }
\noalign{\hrule height 1pt}\hline
 			& \ \ & \multicolumn{5}{c}{unweighted} & \ \ & \multicolumn{5}{c}{weighted} \\
$k$ 		& \ \ & $1$ & $2$ & $3$ & $4$ & $5$    & \ \ & $1$ & $2$ & $3$ & $4$ & $5$ \\
\noalign{\hrule height 1pt}\hline
Range Voting & \ \ & 1 	  & 4.82 & 4.51 & 4.50 & 4.60 & \ \ & 1    & 4.46 & 4.96 & 5.14 & 5.14 \\[0.05cm]
Plurality 	 & \ \ & 1.05 & 5.03 & 4.66 & 4.71 & 4.81 & \ \ & 1.05 & 4.77 & 5.29 & 5.47 & 5.49 \\[0.05cm]
Borda 		 & \ \ & 1.01 & 4.83 & 4.47 & 4.50 & 4.61 & \ \ & 1.01 & 4.51 & 4.98 & 5.16 & 5.18 \\[0.05cm]
Harmonic 	 & \ \ & 1.02 & 4.97 & 4.60 & 4.62 & 4.72 & \ \ & 1.02 & 4.64 & 5.16 & 5.35 & 5.36 \\
\noalign{\hrule height 1pt}\hline
\end{tabular}
\caption{Average distributed distortion from $1000$ simulations with bad partitions of voters into districts.}
\label{tab:jester-worst}
\end{table}

\section{Best-Case Symmetric Partitions via Districting} \label{sec:districting}
Motivated by the very bad worst-case distortion guarantees of voting rules due to the partition of the voters into districts, in this section we turn our attention to a somewhat different setting. We assume that the $k$ districts are not a priori defined, and instead we are free to decide the partition of the voters into the districts so as to minimize their effect on the distortion of the underlying voting rule. We refer to the process of partitioning the voters into $k$ districts as \emph{k-districting}.

\subsection{Range Voting}
We focus on symmetric districts and start our analysis with the question of whether it is possible to define the districts so that the optimal alternative (i.e., the one that maximizes the social welfare of the voters) wins the general election when RV is used as the voting rule within the districts. Unfortunately, as we show with our next theorem, this is not always possible. The instances presented in the following proof are such that the optimal alternative loses in all districts, under any partition of the voters into $k$ districts.

\begin{theorem}\label{thm:RV-partitioning-lower}
For every $k \geq 2$, there exists an instance such that no symmetric $k$-districting allows the optimal alternative to win the district-based election when RV is the voting rule.
\end{theorem}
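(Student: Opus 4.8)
The plan is to construct a single instance (depending on $k$) in which some alternative $b$ has strictly larger social welfare than every other alternative, yet in \emph{every} symmetric $k$-districting $b$ fails to win any district, so some other alternative is elected and $b$ cannot be the outcome. The cleanest way to do this is to use only $m=3$ alternatives $\{a, b, c\}$ and to make the voters fall into a few ``types'' whose valuations are carefully chosen so that $b$ is globally optimal but is \emph{never} the Range-Voting winner of any group of $n/k$ voters, no matter how the voters are split.

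\medskip

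First I would fix $n$ to be a convenient multiple of $k$ (say $n = 2k$, so each district has exactly two voters, or more generally $n/k$ voters) and introduce two voter types: ``$a$-lovers'' who value $a$ at $1$ and $b,c$ at $0$, and ``balanced'' voters who value $b$ slightly above $1/2$ and split the rest between $a$ and $c$ — or, to make the losing-in-every-district property robust, value $c$ at something like $1/2+\varepsilon$, $b$ at $1/2-\varepsilon$, and $a$ at $0$. The point is to choose the mix of types and the exact values so that (a) summed over all $n$ voters, $\SW(b\mid\vv) > \SW(a\mid\vv)$ and $\SW(b\mid\vv) > \SW(c\mid\vv)$, but (b) in \emph{any} subset of size $n/k$, the Range-Voting winner is never $b$: whenever a district contains at least one $a$-lover the winner will be $a$ (their weight of $1$ on $a$ dominates), and whenever a district contains only balanced voters the winner will be $c$ (since each such voter prefers $c$ to $b$). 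So $b$ wins zero districts, the overall winner is $a$ or $c$, and since $b$ is optimal the districting fails. Crucially, this argument is \emph{partition-independent}: it holds for every symmetric $k$-districting simultaneously, which is exactly what the theorem demands.

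\medskip

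The main obstacle is the simultaneous balancing of the two requirements: the \emph{global} inequality ``$b$ is optimal'' pulls in the direction of many balanced voters with high $b$-value, while the \emph{local} ``$b$ never wins'' requirement forces each balanced voter to rank $c$ (strictly) above $b$ and forces enough $a$-lovers that mixed districts go to $a$ — and one must check that there is no way to isolate a district of balanced voters in which $b$ nonetheless beats $c$ on total welfare. This is why making the balanced voters uniformly prefer $c$ to $b$ by a fixed margin $\varepsilon$ is the right move: then in \emph{any} all-balanced district, $c$'s welfare exceeds $b$'s by a positive amount, independently of the district's composition. One then just has to verify the counting: with the right number of $a$-lovers versus balanced voters, $\SW(a\mid\vv)$ (which grows with the number of $a$-lovers, each contributing $1$) stays below $\SW(b\mid\vv) \approx (1/2)\cdot(\#\text{balanced})$, while $\SW(c\mid\vv) \approx (1/2+\varepsilon)\cdot(\#\text{balanced})$ also stays below $\SW(b\mid\vv)$ — this forces $\varepsilon$ to be small and the number of $a$-lovers to be small relative to the balanced voters but at least $1$. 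After picking concrete numbers (e.g., a single $a$-lover padded out appropriately, with the remaining voters balanced, $\varepsilon$ tiny), the verification is a short arithmetic check, and letting $\varepsilon \to 0$ if needed keeps $b$ strictly optimal while $c$'s per-voter value is only infinitesimally below $b$'s, which is still enough to deny $b$ any district win. I would present the instance explicitly for general $k$, state the two welfare inequalities, observe the ``mixed district $\Rightarrow a$ wins, pure-balanced district $\Rightarrow c$ wins'' dichotomy, and conclude that $b$ is elected under no symmetric $k$-districting.
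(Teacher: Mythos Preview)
Your construction has a fatal flaw: in the instance you describe, $b$ is \emph{not} the optimal alternative---$c$ is. Every balanced voter values $c$ at $1/2+\varepsilon$ and $b$ at $1/2-\varepsilon$, while every $a$-lover values both $b$ and $c$ at $0$. Hence $\SW(c\mid\vv) = (1/2+\varepsilon)\cdot(\#\text{balanced}) > (1/2-\varepsilon)\cdot(\#\text{balanced}) = \SW(b\mid\vv)$, regardless of how small $\varepsilon>0$ is. So your claim that ``$\SW(c\mid\vv)\dots$ also stays below $\SW(b\mid\vv)$'' is simply false, and the whole argument collapses: the optimal alternative $c$ \emph{does} win some districts (every pure-balanced one), so it can certainly be made the election winner.

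This is not a patchable arithmetic slip but a structural obstruction to the three-alternative approach. Your dichotomy ``mixed district $\Rightarrow a$ wins, pure-balanced district $\Rightarrow c$ wins'' forces every voter to weakly prefer $c$ to $b$; once that happens, $c$ Pareto-dominates $b$ and $b$ cannot be optimal. More generally, if a single fixed competitor beats $b$ in every conceivable district, it beats $b$ globally too, since global welfare is just a sum of district welfares. The paper avoids this trap by using \emph{many} competing alternatives $a_1,\ldots,a_q$: each voter strongly prefers her own $a_i$ to $b$, but since these are all different alternatives, no single $a_i$ accumulates enough global welfare to beat $b$. The per-district argument then becomes a pigeonhole count (in any district of size $n/k$, some $a_i$ is sufficiently over-represented to beat $b$ there). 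Your three-alternative route cannot reproduce this mechanism; to fix the proof you need to let the alternative that defeats $b$ vary across districts, which essentially forces the many-alternative construction the paper uses.
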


\begin{proof}
Let $q \geq 2$ be a parameter, and $\calM = \{a_1, ..., a_q, b\}$. In the following, we will present different instances for different values of $k$, and will use $q$ as an even or odd number depending on our needs so that the number of voters per district $n/k$ is an integer. 

For $k=2$, let $\varepsilon \in \left( 0, \frac{n}{(n+3)(n+4} \right)$. Consider a general election with $n=3q$ voters and valuation profile $\vv$ such that for every $i \in [q]$ there are three voters with value $\frac{n}{n+3} - \varepsilon$ for alternative $a_i$ and value $\frac{3}{n+3} + \varepsilon$ for $b$; the value of these voters for any other alternative is zero.
The optimal alternative is $b$, since
$$\SW(a_i | \vv) = \frac{3 n}{n+3} - 3\varepsilon$$ 
for every $i \in [q]$, and 
$$\SW(b | \vv) = \frac{3 n}{n+3} + n\varepsilon.$$
We now claim that there exists no partition of the voters into two districts of size $n/2$ such that the election winner is $b$. To this end, consider any set $A$ of $n/2$ voters. By the definition of $\vv$, the welfare of the voters in $A$ for $b$ is equal to $\frac{n}{2} \left( \frac{3}{n+3} + \varepsilon \right)$. 
However, since $n/2 > q$, there exists an $i^* \in [q]$ such that $A$ includes at least two of the voters that positively value alternative $a_{i^*}$. Hence, the welfare of the voters in $A$ for $a_{i^*}$ is at least 
$2 \left(\frac{n}{n+3} - \varepsilon \right)$, 
which is strictly more than $\frac{n}{2} \left( \frac{3}{n+3} + \varepsilon \right)$ by the definition of $\varepsilon$. This yields that $b$ cannot win in any district consisting of $n/2$ voters, and thus cannot be the winner of the election under any symmetric $2$-districting.
 
For $k \geq 3$, let $\varepsilon \in \left( 0, \frac{n}{(n+k)(n+k-1)} \right)$. Consider a general election with $n=(k-1)q \geq 2q$ voters and valuation profile $\vv$ such that for every $i \in [q]$ there are $k-1$ voters with value $\frac{n}{n+k} + \varepsilon$ for alternative $a_i$ and value $\frac{k}{n+k}-\varepsilon$ for alternative $b$; the value of these voters for any other alternative is zero.
We have that 
$$\SW(a_i | \vv) = (k-1) \left( \frac{n}{n+k} + \varepsilon \right)$$ 
for every $i \in [n]$, and 
$$\SW(b | \vv) = n \left(\frac{k}{n+k} - \varepsilon \right).$$
By the definition of $\varepsilon$, $b$ is clearly the optimal alternative. 
Now, consider any set $A$ of $n/k$ voters. Their welfare for $b$ is by definition equal to $\frac{n}{n+k}-\frac{n \varepsilon}{k}$, while their welfare for the $a$-type alternatives they rank first is at least $\frac{n}{n+k} + \varepsilon$. Therefore, $b$ cannot win the election under any symmetric $k$-districting.
\end{proof}

We continue the negative results by showing that the problem of deciding whether it is possible to define the districts such that the optimal alternative wins the district-based election with RV is NP-complete. 
The proofs of the next two theorems (which handle different cases) follow by reductions from a constrained version of {\sc Partition} which requires partitioning a set of numbers into two sets of equal size and sum. Formally,
\begin{quote}
{\sc c-Partition}: Given a set of $q$ positive integer numbers $\{x_1, ..., x_q\}$, find a partition of them into two sets $X$ and $\overline{X}$ such that $|X| = |\overline{X}|$ and $\sum_{i \in X} x_i = \sum_{i \in \overline{X}}x_i$.  
\end{quote}
This version of {\sc Partition} is known to be NP-hard (see problem SP12 in \citep{GJ79}). Before we proceed with our hardness reductions below, we will transform the given {\sc c-Partition} instance slightly. In particular, we normalize the input numbers by dividing with their sum. Hence, in what follows, we assume without loss of generality that the (now real) numbers $\{x_1, ..., x_q\}$ sum up to $1$ and the goal of {\sc c-Partition} is to split the numbers into two sets of equal size such that the sum in each set is exactly $1/2$; we also assume that each individual number is strictly less than $1/2$, as otherwise {\sc c-Partition} is easy. Clearly, by the cardinality constraint of {\sc c-Partition}, $q \geq 2$ is an even number.

\begin{theorem}\label{thm:nphard}
For $n \leq m$, the problem of deciding whether there is a symmetric $k$-districting so that the optimal alternative is the winner of the district-based election when RV is used as the voting rule within the districts is NP-complete, for every $k \geq 2$.
\end{theorem}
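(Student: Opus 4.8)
The plan is to reduce from {\sc c-Partition}, exploiting the freedom (since $n \leq m$) to have many ``dummy'' alternatives that can be used to absorb voters. Given a {\sc c-Partition} instance with normalized numbers $\{x_1, \ldots, x_q\}$ summing to $1$, I would build an instance with $n = q$ voters (one per number), set $k = 2$ to begin with, and design the valuations so that voter $i$ places value essentially $x_i$ on a ``target'' alternative $b$ and distributes the remaining mass on a private dummy alternative $a_i$ (so that the unit-sum constraint holds). The key calibration is to make $b$ the optimal alternative overall — this holds as long as $\sum_i x_i = 1$ exceeds the welfare of every dummy — while ensuring that within any single district of $n/2 = q/2$ voters, $b$ beats every dummy alternative if and only if the sum of the $x_i$'s of the voters in that district is at least $1/2$. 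Since there are two symmetric districts partitioning all $q$ voters, $b$ wins both local elections (and hence the general election) precisely when the voters split into two groups of size $q/2$ each with $x$-sum $\geq 1/2$; because the total is exactly $1$, this forces each side to have sum exactly $1/2$, i.e., exactly a solution to {\sc c-Partition}.

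The main steps, in order, are: (1) describe the construction (voters, the $q$ private dummy alternatives $a_1, \ldots, a_q$, one shared alternative $b$, and a handful of extra dummy alternatives if needed to push $m \geq n$ and to neutralize tie-breaking), and specify the valuations $v_{ib}$ and $v_{i a_i}$ as explicit affine functions of $x_i$ chosen so unit-sum holds and the gaps come out right; (2) verify $b$ is the unique social-welfare maximizer, so that ``optimal alternative wins'' is equivalent to ``$b$ wins''; (3) prove the forward direction: a {\sc c-Partition} solution yields a districting where $b$ wins both districts by Range Voting, hence wins the election; (4) prove the reverse direction: if some symmetric $2$-districting makes $b$ the election winner, then $b$ must win at least one district, and by the cardinality constraint ($q/2$ voters) and the total-sum bookkeeping, that district has $x$-sum exactly $1/2$, yielding a {\sc c-Partition} solution; (5) handle $k \geq 3$ by padding with $(k-2)\cdot(q/2)$ additional ``filler'' voters who unanimously approve $b$, so that $k-2$ of the districts trivially go to $b$ and the remaining two districts must again realize a {\sc c-Partition} split; (6) note membership in NP is immediate since a districting is a polynomial-size certificate whose correctness (running Range Voting in each district) is checkable in polynomial time.

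I would be careful about two technical points. First, tie-breaking: since Range Voting breaks ties arbitrarily and the problem asks whether the optimal alternative \emph{can} win, the ``yes'' direction only needs one favorable tie-break, but the ``no'' direction (the reverse reduction) must rule out $b$ winning \emph{even with} the most favorable tie-breaking — so I should make the construction so that when the $x$-sum of a district is strictly below $1/2$, some dummy $a_{i^*}$ strictly beats $b$ there (this is exactly the trick used in the proof of \cref{thm:RV-partitioning-lower}, where a voter contributing at least two copies forces a strict inequality; here the analogue is that if a district's $x$-sum is $< 1/2$ then $\SW(b)$ in that district is strictly less than the welfare of the largest dummy present, using that each $x_i < 1/2$ and the dummies carry the complementary mass). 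Second, integrality: I need $n/k$ to be an integer, which I can arrange by choosing $q$ even (guaranteed by the {\sc c-Partition} cardinality constraint) and padding appropriately for $k \geq 3$; and I need $n \leq m$, which I get for free by taking $m = q + O(1) \geq n$ thanks to the private dummy alternatives.

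\textbf{Main obstacle.} The delicate part will be the exact numerical calibration in step (1): choosing the valuation entries as functions of $x_i$ so that simultaneously (a) unit-sum holds for every voter, (b) $b$ is the \emph{strict} global welfare maximizer, and (c) the local comparison ``$b$ vs.\ best dummy in a district'' flips exactly at $x$-sum $= 1/2$ with strict inequalities on the wrong side to defeat adversarial tie-breaking. Getting all three to hold at once — especially reconciling the strictness needed for the reverse direction with the unit-sum normalization — is where the real work lies; everything else is bookkeeping.
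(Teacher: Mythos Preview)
Your reduction idea is right in spirit, but two concrete gaps prevent it from going through.

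For the base case $k=2$: with a \emph{single} private dummy $a_i$ per voter, the local comparison cannot be made to flip at $\sigma_D=\tfrac12$. In a district $D$, the target $b$'s strongest competitor is $\max_{j\in D}v_{j,a_j}$, which depends on \emph{which} voters lie in $D$, not just on $\sigma_D$; for any affine choice $v_{ib}=c+dx_i$ this competitor equals $1-c-d\min_{j\in D}x_j$, and no choice of $c,d$ eliminates the $\min$. Concretely, with your stated $v_{ib}\approx x_i$, even in a yes-instance district ($\sigma_D=\tfrac12$) the best dummy has welfare $1-\min_{j\in D}x_j>\tfrac12$, so $b$ loses there and the forward direction already fails. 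The paper resolves this by giving each number-voter \emph{two} private alternatives: $\alpha_i$ with \emph{constant} value $\tfrac12-\varepsilon$ (this sets a universal threshold independent of $D$) and $\beta_i$ with value $\tfrac12+\varepsilon-x_i$ (this absorbs the slack so unit-sum holds). Then the strongest local rival is always exactly $\tfrac12-\varepsilon$, and $\theta$ wins $D$ iff $\sigma_D\ge\tfrac12$.

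For $k\ge 3$: making the filler voters unanimously approve $b$ destroys the reverse direction. In a no-instance, the districting that puts all fillers into $k-2$ districts hands $b$ those $k-2$ wins while each private dummy wins at most one of the remaining two; for $k\ge 4$ this already makes $b$ the necessary election winner regardless of the {\sc c-Partition} answer (and mixing fillers into number-voter districts only helps $b$ further). The paper does the opposite: each dummy voter has \emph{zero} value for the target $\theta$ and instead splits its mass between a fresh private pair $\gamma_i,\delta_i$. This guarantees that dummy-only districts produce pairwise distinct winners $\neq\theta$, and that any district containing a dummy requires $\theta$-welfare exceeding $\tfrac12$ for $\theta$ to win it. Hence $\theta$ can be the necessary overall winner only by taking two districts, which forces a genuine equal-size, equal-sum split of the number-voters.
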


\begin{proof}
The problem is obviously in NP: given a partition of the voters into $k$ districts, we can check whether the winner of the election is the optimal alternative in polynomial time. For the hardness, we show a reduction from {\sc c-Partition}.

Let $\varepsilon < \frac{1}{2} \min_{i \in [q]}x_i$ be an infinitesimally small positive constant. We define a district-based election with a set $\calM$ of $m = kq+1$ alternatives and a set $\calN$ of $n = \frac{kq}{2}$ voters. To simplify our discussion, we enumerate the alternatives as
$$\calM = \{\alpha_1, ..., \alpha_q, \beta_1, ..., \beta_q, \gamma_1, ..., \gamma_{\frac{k-2}{2}q}, \delta_1, ..., \delta_{\frac{k-2}{2}q}, \theta\}$$
and divide the set of voters into a set $\Lambda = \{\lambda_1, ..., \lambda_q\}$ consisting of $q$ number-voters and a set $\Xi = \{\xi_1, ..., \xi_{\frac{k-2}{2}q}\}$ consisting of $\frac{k-2}{2}q$ dummy-voters.
The valuation profile $\vv$ is such that 
\begin{itemize}
\item For each $i \in [q]$, number-voter $\lambda_i$ has value $1/2 - \varepsilon$ for $\alpha_i$, value $1/2 + \varepsilon - x_i$ for $\beta_i$, value $x_i$ for $\theta$, and zero value for any other alternative. Observe that due to the definition of $\varepsilon$, $\lambda_i$ has strictly more value for $\alpha_i$ than for $\beta_i$.  
\item For each $i \in  \left[ \frac{k-2}{2}q \right]$, dummy-voter $\xi_i$ has value $1/2$ for $\gamma_i$, value $1/2$ for $\delta_i$, and zero value for any other alternative. 
\end{itemize}
We have
\begin{align*}
& \SW(\alpha_i | \vv) = 1/2 - \varepsilon, \text{ for every } i \in [q] \\
& \SW(\beta_i | \vv) = 1/2 + \varepsilon - x_i, \text{ for every } i \in [q] \\
& \SW(\gamma_i | \vv) = 1/2, \text{ for every } i \in \left[ \frac{k-2}{2}q \right] \\
& \SW(\gamma_i | \vv) = 1/2, \text{ for every } i \in \left[ \frac{k-2}{2}q \right] \\
& \SW(\theta | \vv) = \sum_{i \in [q]} x_i = 1.
\end{align*}
Therefore, the optimal alternative is $\theta$, and our goal is to partition the voters into $k$ districts of size $q/2$ such that $\theta$ wins strictly more districts than any other alternative. 

Assume that the given instance of {\sc c-Partition} is a yes-instance and the two sets of numbers are $X$ and $\overline{X}$ such that $|X| = |\overline{X}| = q/2$. Then, we partition the number-voters into two districts of size $q/2$ such that the voters corresponding to numbers in $X$ are all together in one of these districts and the voters corresponding to numbers in $\overline{X}$ are all together in the other district. We also arbitrarily partition the $\frac{k-2}{2}q$ dummy-voters into the remaining $k-2$ districts of size $q/2$. Now, observe that $\theta$ wins both districts with number-voters since the welfare of the voters therein is exactly $1/2$, while the welfare for any other alternative is at most $1/2-\varepsilon$. At the same time, each district consisting of dummy-voters has a different winner. Hence, since $\theta$ wins two districts and any other alternative wins at most one district, $\theta$ is the winner of the election as desired.

Conversely, assume that the given instance of {\sc c-Partition} is a no-instance. Then, any partition of the numbers $\{x_i\}_{i \in [q]}$ into two sets of size $q/2$ leads to one set with sum with strictly less than $1/2$. Therefore, alternative $\theta$ cannot win more than one district, and thus is not a necessary winner; observe that, due to the definition of $\varepsilon$, $\theta$ loses to the $\alpha$-type alternatives if the welfare of the voters within a district for her is strictly less than $1/2$. 
\end{proof}

The proof of \cref{thm:nphard} essentially implies that, in case $n \leq m$, it is NP-hard to define $k \geq 2$ symmetric districts such that the winner of the district-based election has social welfare within a factor strictly less than $2$ of the optimal social welfare. In other words, computing a $k$-districting such that the distributed distortion of $\RV$ is less than $2$ is an intractable problem. 

Next, we turn our attention to the more natural case $n > m$, and again show a hardness result using a reduction similar to the one used for the proof of the above theorem, which however holds only for $k \geq 5$; we have been unable to resolve the complexity of the problem for $k \in \{2,3,4\}$ in this case.

\begin{theorem}\label{thm:nphard-2}
For $n > m$, the problem of deciding whether there is a symmetric $k$-districting so that the optimal alternative is the winner of the district-based election when RV is used as the voting rule within the districts is NP-complete, for every $k \geq 5$.
\end{theorem}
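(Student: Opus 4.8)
The plan is to reuse the reduction from {\sc c-Partition} behind \cref{thm:nphard}, but to redesign the ``dummy'' gadget so that the produced election has strictly more voters than alternatives while the optimal alternative stays unique. Given a {\sc c-Partition} instance $\{x_1,\dots,x_q\}$ (normalised so $\sum_i x_i=1$, each $x_i<1/2$, $q$ even), I would first observe that we may assume $q$ is larger than any fixed constant depending on $k$, since otherwise the instance has bounded size and is decided directly. I then build an election with $k$ districts of size $q/2$, alternatives $\{\theta\}\cup\{\alpha_i,\beta_i\}_{i\in[q]}\cup\{\gamma_j\}_{j\in[k-2]}$ (so $m=2q+k-1$), and $n=qk/2$ voters: the $q$ \emph{number-voters} $\lambda_i$ exactly as in \cref{thm:nphard} ($\lambda_i$ has value $1/2-\varepsilon$ for $\alpha_i$, $1/2+\varepsilon-x_i$ for $\beta_i$, $x_i$ for $\theta$, $0$ elsewhere, with $\varepsilon$ an inverse-polynomially small constant), plus $k-2$ groups of $q/2$ \emph{dummy-voters}, a voter of group $j$ putting value $2/q$ on $\gamma_j$ and spreading the remaining $1-2/q$ uniformly over all other alternatives. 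A short computation shows that $\SW(\theta)$ equals $1$ plus a dummy term that is common to all $\alpha_i,\beta_i$, and strictly exceeds $\SW(\gamma_j)$, so $\theta$ is the unique optimal alternative. Crucially, $n>m$ becomes $qk/2>2q+k-1$, which holds for large $q$ exactly when $k\ge 5$; for $k\le 4$ this counting fails, which is why the argument does not extend.

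For the yes-direction, given a {\sc c-Partition} solution $X,\overline X$, put the number-voters of $X$ in one district, those of $\overline X$ in a second, and the group-$j$ dummies in district $j$ for $j\in[k-2]$. In each of the first two districts $\theta$ has welfare exactly $1/2$, while every $\alpha_i$ gets $1/2-\varepsilon$, every $\beta_i$ gets $1/2+\varepsilon-x_i<1/2$, and every $\gamma_j$ gets $0$; hence $\theta$ wins both \emph{strictly}. In district $j$ the alternative $\gamma_j$ wins strictly. So $\theta$ strictly wins two districts and every other alternative at most one, making $\theta$ the winner regardless of tie-breaking.

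For the no-direction (assume {\sc c-Partition} is a no-instance), I would show no symmetric $k$-districting makes $\theta$ the necessary winner, via three observations: (a) for $\theta$ to be the necessary winner it must \emph{strictly} win at least two districts, since otherwise — as $k\ge 5$ — some other alternative matches $\theta$'s adversarial win-count and beats it at the final stage; (b) any district consisting only of dummy-voters is beaten by a $\gamma_j$ of some dummy present there, so it is never strictly won by $\theta$; and (c) in a district containing a number-voter $\lambda_i$, the dummy-voters contribute equally to $\theta$ and to $\alpha_i$, so $\theta$ strictly beating $\alpha_i$ forces the number-voters in that district to have $x$-sum $>1/2-\varepsilon$. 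Combining these, $\theta$ strictly wins two districts whose disjoint number-voter sets $N_1,N_2$ each have $x$-sum $>1/2-\varepsilon$; choosing $\varepsilon$ smaller than both $\min_i x_i/2$ and the (inverse-polynomially bounded) gap by which every size-$q/2$ subset sum differs from $1/2$ forces $|N_1|,|N_2|<q/2$, hence at least two number-voters lie outside $N_1\cup N_2$, so $\sum_{i\notin N_1\cup N_2}x_i\ge 2\min_i x_i>2\varepsilon$, contradicting $\sum_{N_1}x_i+\sum_{N_2}x_i>1-2\varepsilon$.

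The main obstacle is engineering the dummy gadget so that it simultaneously (i) inflates $n$ past $m$, (ii) leaves $\theta$ strictly optimal, (iii) yields a distinct strict winner in each dummy district of the good districting, and (iv) is ``harmless'' in every bad districting in the exact sense of observations (b) and (c), which is what makes the disjoint-subset contradiction go through. Checking that all the induced inequalities hold over the permitted ranges of $q$, $m$, and $\varepsilon$ is the bulk of the (routine but delicate) work, and it is precisely the point where the construction degrades for $k\in\{2,3,4\}$, leaving those cases open.
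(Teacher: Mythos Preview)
Your reduction follows the same blueprint as the paper's --- a polynomial reduction from {\sc c-Partition} with $q$ number-voters encoding the partition and $(k-2)q/2$ dummy-voters filling the remaining districts --- but the dummy gadget differs. You introduce fresh alternatives $\gamma_1,\dots,\gamma_{k-2}$, one per dummy group, and give each dummy a large spike on its own $\gamma_j$ with the remainder spread uniformly; the paper instead keeps $m=2q+1$ and has the dummies of group $i$ place a tiny $\delta$-bias on the existing $\alpha_i$, which forces it to split into two cases ($k\le 2q+2$ versus $k>2q+2$, the latter needing extra alternatives). Your single-case construction is arguably cleaner, and your no-direction argument --- explicitly handling mixed districts via observations (b) and (c) --- is more thorough than the paper's brief treatment, which essentially asserts that dummies never help $\theta$.

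Two small issues to fix. First, the gap $g$ between $1/2$ and every size-$q/2$ subset sum is \emph{not} inverse-polynomially bounded: since {\sc c-Partition} is only weakly NP-hard, $g$ can be as small as $1/(2T)$ with $T$ the (possibly exponential) original integer sum. This does not break the reduction --- $1/(2T)$ has polynomial bit-length and you can set $\varepsilon$ below it --- but the wording should change. Second, the step ``forces $|N_1|,|N_2|<q/2$, hence at least two number-voters lie outside'' is not what your ingredients actually yield: one of $N_1,N_2$ could still have size $q/2$. The clean finish is the contrapositive of your own bounds: $\varepsilon<\tfrac12\min_i x_i$ together with $\sum_{N_1\cup N_2}x_i>1-2\varepsilon$ forces $N_1\cup N_2=[q]$, hence $|N_1|=|N_2|=q/2$ exactly; then in the no-case the gap condition $\varepsilon<g$ yields the contradiction directly. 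With these corrections the argument goes through.
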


\begin{proof}
For the hardness, we will again show a reduction from {\sc c-Partition}. We distinguish between the following two cases depending on the relation between the number of districts $k$ and the cardinality $q$ of the set of numbers in the instance of {\sc c-Partition}. Let $\varepsilon \in \left( 0, \frac{1}{2} \min_{i \in [q]}x_i \right)$ be an infinitesimally small positive constant. 

\paragraph{Case I: $k \leq 2q+2$.}
We define a district-based election with a set $\calM$ of $m = 2q+1$ alternatives and a set $\calN$ of $n = kq/2$ voters; observe that $k \geq 5$ implies that $n > m$. We enumerate the alternatives as
$\calM = \{\alpha_1, ..., \alpha_{2q}, \theta\}$
and divide the set of voters into a set 
$\Lambda = \{\lambda_1, ..., \lambda_q\}$
consisting of $q$ number-voters and a set 
$\Xi = \{\xi_1, ..., \xi_{\frac{k-2}{2}q}\}$
consisting of $\frac{k-2}{2}q$ dummy-voters.
We further divide the set of dummy-voters into $k-2$ sets of size $q/2$: $\Xi= \Xi_1 \cup ... \cup \Xi_{k-2}$. 
The valuation profile $\vv$ is such that 
\begin{itemize}
\item For each $i \in [q]$, number-voter $\lambda_i$ has value $1/2 - \varepsilon$ for $\alpha_i$, value $1/2 + \varepsilon - x_i$ for $\alpha_{q+i}$, value $x_i$ for $\theta$, and zero value for any other alternative. By the definition of $\varepsilon$, $\lambda_i$ has strictly more value for $\alpha_i$ than for $\alpha_{q+i}$.  

\item For each $i \in  \left[ k-2 \right]$, all dummy-voters in set $\Xi_i$ have $\frac{1}{2q+1} + \delta$ for $a_i$, and $\frac{1}{2q+1} - \frac{\delta}{2q}$ for each of the remaining $2q$ alternatives in $\calM \setminus \{a_i\}$, where $\delta \in \left( 0 , \frac{2(1+2\varepsilon)}{2q+1} \right)$ is an arbitrarily small but strictly positive constant.  
\end{itemize}
We have
\begin{align*}
& \SW(\alpha_i | \vv) \leq  1/2 - \varepsilon + (k-2)\frac{q}{2} \cdot \frac{1}{2q+1} 
+ \delta \left( \frac{q}{2} - \frac{k-3}{4}\right), \text{ for every } i \in [q] \\
& \SW(\theta | \vv) = 1 + (k-2)\frac{q}{2} \cdot \frac{1}{2q+1} - \delta \cdot \frac{k-2}{4}.
\end{align*}
Therefore, by the range of possible values of $\delta$, the optimal alternative is $\theta$, and our goal is to partition the voters into $k$ districts of size $q/2$ such that $\theta$ wins strictly more districts than any other alternative. 

Assume that the given instance of {\sc c-Partition} is a yes-instance and the two sets of numbers are $X$ and $\overline{X}$ such that $|X| = |\overline{X}| = q/2$. Then, we partition the number-voters into two districts of size $q/2$ such that the voters corresponding to numbers in $X$ are bundled together in one district and the voters corresponding to numbers in $\overline{X}$ are bundled together in the other district. Moreover, for every $i \in [k-2]$, we put all dummy-voters of set $\Xi_i$ in the same district. Given this $k$-districting, alternative $\theta$ wins both districts with number-voters since the welfare of the voters therein is exactly $1/2$, while the welfare for any other alternative is at most $1/2-\varepsilon$. The winner of each of the remaining $k-2$ districts is a different alternative; in particular, the winner of the district containing the dummy-voters of $\Xi_i$ is $\alpha_i$, since all these voters have strictly more value for $\alpha_i$ than any other alternative.

Conversely, assume that the given instance of {\sc c-Partition} is a no-instance. Then, any partition of the numbers $\{x_i\}_{i \in [q]}$ into two sets of size $q/2$ leads to one set with sum with strictly less than $1/2$. Since all dummy-voters weakly prefer all other alternatives over $\theta$, $\theta$ cannot win more than one district (containing the number-voters corresponding to the numbers of the {\sc c-Partition} instance that sum up to strictly more than $1/2$), and thus cannot become a necessary winner. 

\paragraph{Case II: $k > 2q+2$.} 
We treat this case similarly to the previous one, but we also add $k-2q-2$ alternatives more so that we have an instance with $m=k-1$. This way we are able to have a different winner for each district consisting of only dummy-voters in case the given instance of {\sc c-Partition} is a yes-instance. The values of the dummy-voters are such that they have value $\frac{1}{m} + \delta$ for their favorite alternative and $\frac{1}{m} - \frac{\delta}{m-1}$ for every other alternative, depending on the dummy-set $\Xi_i$ they belong to. As the proof of the reduction is similar to before, the details are omitted and left as an exercise for the reader.
\end{proof}

\subsection{Plurality Voting}
In contrast to the above result for the optimal alternative and RV, we next show that we can always find a symmetric $k$-districting so that the PV winner without districts can be made the winner of the district-based election when PV is used as the voting rule within the districts. Since the voting rule is PV, we assume that the only knowledge which we can leverage in order to define the districts is about the favorite alternatives of the voters (i.e., for each voter, we know the alternative she approves). 

\begin{theorem}\label{thm:districting-PV-upper}
For any $k \geq 2$, there always exists a symmetric $k$-districting that allows the winner of PV without districts to win the district-based election with $k$ districts, and this districting can be computed in polynomial time.
\end{theorem}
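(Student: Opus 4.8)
The plan is to convert the claim into a combinatorial allocation problem and solve it by an explicit construction. Write $\sigma = n/k$ for the common district size, let $w$ be the single-district Plurality winner, and for each alternative $j$ let $s_j$ be the number of voters whose top alternative is $j$, so $\sum_{j \in \calM} s_j = n$ and $s_w = \max_{j \in \calM} s_j$. Alternatives with no approvers never win any local election, so we may discard them; after relabelling we are left with $m$ alternatives having scores $s_1 = s_w \ge s_2 \ge \cdots \ge s_m \ge 1$, and in particular $s_w \ge n/m$. Since all district weights equal $1$, the district-based election is won by the alternative carrying the most districts, with ties (both among districts and among alternatives) broken however we like. Hence it suffices to partition $\calN$ into $k$ blocks of $\sigma$ voters each so that $w$ wins at least as many local Plurality elections as every other alternative, and to do this in polynomial time.

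The first step is to decide how many districts $w$ should win. If $w$ wins $g$ districts, the remaining $k-g$ are split among at most $m-1$ alternatives, so some alternative other than $w$ necessarily wins at least $\lceil (k-g)/(m-1) \rceil$ of them; conversely, distributing the winners of those $k-g$ districts in round-robin fashion over the alternatives $2, \ldots, m$ keeps each of them to at most $\lceil (k-g)/(m-1) \rceil$. Choosing $g := \lceil k/m \rceil$ and using $k - g \le k - k/m = k(m-1)/m$ gives $\lceil (k-g)/(m-1) \rceil \le \lceil k/m \rceil = g$, so the target reduces to building a districting in which $w$ is the plurality winner of $g$ prescribed districts and, in each of the remaining $k-g$ districts, a prescribed alternative (chosen round-robin among $2, \ldots, m$) is the plurality winner.

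To construct it, I would build each district around its designated winner: a ``core'' of voters approving that alternative together with ``filler'' voters approving other alternatives, arranged so that in that district no alternative other than the designated one matches the core size, which makes the designated alternative a plurality winner there (strictly, or via favourable within-district tie-breaking). Drawing approvers of each alternative from a global supply, one spreads $w$'s $s_w$ approvers over the $g$ districts it is to win and over the filler slots elsewhere as evenly as possible, and likewise for each other alternative. Because winning a district of size $\sigma$ requires only about $\sigma/m$ voters of the winning alternative once its remaining $\sigma - \sigma/m$ slots are split among the $m-1$ other alternatives, the bounds $s_w \ge n/m$ and $s_j \ge 1$ are exactly what makes the supplies suffice. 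Verifying that this balanced allocation is simultaneously feasible --- no supply exhausted, every per-district balance constraint met, every designated winner actually able to carry its districts --- is the heart of the proof and, I expect, the main obstacle: it is a counting / transportation-feasibility argument, which I would carry out either through an explicit round-robin allocation with careful floor/ceiling bookkeeping, or by verifying a Hall-type condition on the bipartite ``alternatives versus districts'' incidence, together with a separate check of the degenerate cases (for instance $\sigma = 1$, or all but one alternative having very few approvers).

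Polynomial time is then immediate: computing the scores $s_j$ and the winner $w$, fixing $g$ and the per-district designated winners, and carrying out the balanced allocation are all explicit steps running in time polynomial in $n$ and $m$. The conceptual content --- that $w$ need carry only $\lceil k/m \rceil$ districts, and that a plurality favourite can be made to win a district cheaply --- is straightforward; the real work goes into reconciling these counts with the indivisibility of voters.
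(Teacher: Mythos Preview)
Your proposal has a genuine gap: you explicitly identify the feasibility of your balanced allocation as ``the heart of the proof and \ldots\ the main obstacle,'' and then leave it undone. A proof outline that stops precisely at the hard step is not yet a proof. Worse, the specific plan you sketch---assigning the $k-g$ non-$w$ districts round-robin to alternatives $2,\ldots,m$---can fail outright. Take $m=3$, $k=6$, $\sigma=3$ and scores $(s_1,s_2,s_3)=(10,7,1)$. Then $g=\lceil 6/3\rceil=2$, and round-robin asks alternative~$3$ to win two districts; with a single approver it can carry at most one (and that only via a three-way tie). The leftover district then goes to alternative~$1$ or~$2$, and if it goes to~$2$ you lose. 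Handling such cases is not a side ``degenerate case'' check; it is exactly the feasibility argument you have not supplied.

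The paper avoids all of this with a much simpler idea that you are missing: do not designate a different winner per district at all. Instead, make every district a miniature copy of the global electorate by spreading the $n(j)$ approvers of each alternative $j$ as evenly as possible across the $k$ districts. In any district where every alternative receives $\lfloor n(j)/k\rfloor$ of its approvers, the plurality order inside the district coincides with the global plurality order, so $w$ wins that district. Arranging the rounding so that at least $\lceil k/2\rceil$ districts are of this ``all-floors'' type (and the ceilings are pushed into the remaining districts) already gives $w$ a strict majority of districts, far more than the $\lceil k/m\rceil$ you were targeting. The construction is explicit and the running time is trivially polynomial. Your transportation/Hall machinery is unnecessary once you see that the right move is to replicate the global profile rather than to engineer distinct local winners.
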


\begin{proof}
Consider an arbitrary instance with set of alternatives $\calM$, set of voters $\calN$, and valuation profile $\vv$, which induces the ordinal profile $\boldsymbol{\succ}^\vv$. To simplify our discussion, we assume that $n/k$ is an integer. For every alternative $j \in \calM$, let $\calN(j)$ be the set of voters that rank $j$ at the first position according to $\boldsymbol{\succ}^\vv$, and set $n(j) = |\calN(j)|$. 

Now, we create $k$ hypothetical districts such that each district consists of $\frac{n(j)}{k}$ voters from set $\calN(j)$ for every $j \in \calM$. If all fractions $\frac{n(j)}{k}$ are integer numbers, then this clearly defines a partition of the voters into symmetric districts and the winner of each district (and therefore of the district-based election) is the PV winner without districts. In case this is not true however, in order to create a valid partition we do the following: for each $j \in \calM$ we place $\lfloor \frac{n(j)}{k} \rfloor$ voters of $\calN(j)$ in the first $\lceil \frac{k}{2} \rceil$ districts and $\lceil \frac{n(j)}{k} \rceil$ voters of $\calN(j)$ in the remaining districts; this is obviously a valid partition. It is easy to verify that the Plurality winner is the winner of at least $\lceil \frac{k}{2} \rceil$ districts and hence the winner of the general election. 
\end{proof}

We conclude this section by showing that the above result for PV is essentially tight. This follows by the existence of instances where any partition of the voters into any number of districts yields distortion for the general election with PV that is asymptotically equal to the distortion of PV without districts.  

\begin{theorem}\label{thm:districting-PV-lower}
There exist instances where any symmetric $k$-districting yields distortion $\gdist(\PV) = \Omega(m^2)$.
\end{theorem}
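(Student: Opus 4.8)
The plan is to produce a single instance, obtained by reinforcing the single-district $\Omega(m^2)$ lower bound for $\PV$, for which no symmetric $k$-districting (for any $k$) can bring the distortion below $\Omega(m^2)$. Concretely, I would use alternatives $\{a,b,c_1,\dots,c_{m-2}\}$ and split the $n$ voters into: a block of $2n/m$ \emph{near-indifferent} voters whose unit-sum valuation is $1/m+\delta$ for $a$ and $\approx 1/m$ for each other alternative (so $a$ is their strict top choice, yet each of them contributes only $\approx 1/m$ to the welfare of any alternative); and, for each $i$, a block of $n/m$ voters with value $1/2+\varepsilon$ for $c_i$ and $1/2-\varepsilon$ for $b$. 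In this instance $b$ is the optimal alternative with $\SW(b)=\Theta(n)$, while $\SW(a)=\Theta(n/m^2)$, the top-choice counts are $2n/m$ for $a$, $n/m$ for each $c_i$, and $0$ for $b$, and (importantly) $\SW(c_i)=\Theta(n/m)$.

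Since social welfare is a property of the valuation profile alone, the values $\SW(b)=\Theta(n)$ and $\SW(a)=\Theta(n/m^2)$ are unaffected by districting; hence it suffices to prove that in \emph{every} symmetric $k$-districting the overall winner is $a$. (Electing any $c_i$, or any other top-choice alternative, would only give distortion $\Theta(m)$, so having $a$ itself win is exactly what is needed; no district ever elects $b$, because $b$ is nobody's top choice and $\PV$ only elects top choices, which also shows the optimal alternative can never be recovered.)

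The core of the argument is therefore the following claim: for every $k$ and every partition of the voters into $k$ equal-size districts, $a$ wins at least as many local elections as any other alternative, so by resolving ties in its favour $a$ is the overall winner. The two levers I would use are, first, that $a$'s block of top-choice supporters is larger than that of any other alternative \emph{by a factor of two}, and, second, that every district is completely filled and has the fixed size $n/k$, so the multiset of top choices cannot be reshaped at will: any attempt to concentrate the $n/m$ top-choice supporters of a single $c_i$ into a handful of districts leaves the $2n/m$ supporters of $a$ necessarily dense elsewhere, and combined with our freedom in breaking within-district ties (which lets us avoid letting any single medium-welfare alternative accumulate more local wins than $a$) this prevents any $c_i$ from out-winning $a$ across the partition.

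Making this precise --- bounding, over an arbitrary equal partition, the number of districts each alternative can possibly win in terms of the global top-choice counts and the common district size $n/k$, separately handling the regimes $k<m$, $k$ close to $m$, and $k\gg m$ --- is the step I expect to be the main obstacle; it is also the place where the exact factor-two surplus of $a$-supporters (large enough to dominate every partition, yet small enough to keep $\SW(a)=\Theta(n/m^2)$) has to be used carefully. Once the claim is established, $\gdist_k(\PV)\ge \SW(b)/\SW(a)=\Omega(m^2)$ for every symmetric $k$-districting, as required.
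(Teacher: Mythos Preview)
Your core claim---that in every symmetric $k$-districting of your instance alternative $a$ wins at least as many local elections as any other alternative---is false, and the instance therefore does not give the $\Omega(m^2)$ bound. Here is a concrete counterexample with $m=10$, $n=100$, and $k=5$ (district size $20$), so that $a$ has $20$ supporters and each $c_i$ has $10$. Partition the voters as
\begin{itemize}
\item $d_1$: $5$ of $c_1$, $4$ of $c_2$, $4$ of $c_3$, $4$ of $c_4$, $3$ of $c_5$;
\item $d_2$: $5$ of $c_1$, $4$ of $a$, $4$ of $c_6$, $4$ of $c_7$, $3$ of $c_8$;
\item $d_3$: $6$ of $c_2$, $6$ of $c_3$, $6$ of $c_4$, $2$ of $c_5$;
\item $d_4$: $6$ of $c_6$, $6$ of $c_7$, $5$ of $c_5$, $3$ of $c_8$;
\item $d_5$: $16$ of $a$, $4$ of $c_8$.
\end{itemize}
Then $c_1$ is the \emph{strict} plurality winner in both $d_1$ and $d_2$, while $a$ wins only $d_5$; in $d_3$ and $d_4$ neither $a$ nor $c_1$ receives any vote, so no tie-breaking can help. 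Hence $c_1$ wins the election with two districts to one, and the resulting distortion is $\SW(b)/\SW(c_1)=\Theta(m)$, not $\Theta(m^2)$. The phenomenon is generic: because $a$ has only $\Theta(n/m)$ supporters, a districting can pack almost all of them into very few districts, and in the remaining districts a designated $c_i$ can be made the strict winner by spreading the other $c_j$'s thinly. No constant ratio between $a$'s support and each $c_i$'s support fixes this, and boosting $a$'s support to a constant fraction of $n$ destroys the $\Theta(n/m^2)$ welfare you need.

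The paper avoids this tension with a much simpler device: take $n=m$ and give each alternative exactly one top-choice supporter. Then in \emph{every} district the plurality count is a perfect tie (each alternative present gets exactly one vote), so adversarial tie-breaking can select the local winner freely; since each alternative's unique supporter lies in a single district, distinct districts have disjoint candidate sets, all local winners are automatically distinct, and the adversary forces a $k$-way tie at the top level and breaks it in favour of the designated low-welfare alternative $x$. The valuations make $\SW(x)=1/m$ while the optimum has welfare $\Theta(m)$, yielding the $\Omega(m^2)$ ratio. The idea you are missing is thus to manufacture \emph{ubiquitous ties} rather than a support-count advantage for $a$.
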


\begin{proof}
Consider a district-based election with $m$ alternatives $\calM = \{a_1, ..., a_m\}$, $n=m$ voters and the following information about the preferences of the voters: voter $i$ approves alternative $a_i$. Since all alternatives are approved by a single voter, no matter the partition of the voters into $k \geq 2$ districts, the winner of the general election can be any alternative. Let $x=a_{w}$ be the winner of the election, and let $y=a_{o}$ be some other alternative with $o \neq w$.
We define the following valuation profile $\vv$ for the voters:
\begin{itemize}
\item Voter $w$ has value $\frac{1}{m}$ for all alternatives;
\item Voter $o$ has value $1$ for alternative $y$;
\item Voter $i \not\in \{w,o\}$ has value $\frac{1}{2}$ for alternatives $a_i$ and $y$.
\end{itemize}
Therefore, $\SW(x | \vv) = \frac{1}{m}$ and $\SW(y | \vv) = \frac{1}{m} + 1 + (m-2)\frac{1}{2}$, and the distortion of PV is at least 
$$
\frac{\frac{1}{m}+1 + (m-2)\frac{1}{2}}{\frac{1}{m}} = 1 + \frac{m^2}{2}.
$$
Observe that even if we have access to the whole valuation profile $\vv$, since the winner is selected to be the alternative with the most approval votes and each alternative is approved by only one voter, there is no way to define districts and avoid the possibility of alternative $x$ being elected as the winner.
\end{proof}

\section{Conclusion and Possible Extensions} \label{sec:future}
In this paper, we have initiated the study of the distortion of distributed voting. 
We showcased the effect of districts on the social welfare both theoretically from a worst- and a best-case perspective, as well as experimentally using real-world data. Even though we have painted an almost complete picture, our work reveals many interesting avenues for future research.     
In terms of our results, possibly the most obvious open question is whether we can strengthen the weak intractability results of \cref{thm:nphard} and \cref{thm:nphard-2} using reductions from strongly NP-complete problems, and also extend  \cref{thm:nphard-2} to $k \geq 2$ (instead of $k \geq 5$). 

An assumption we have made throughout the paper is that the voting rule used in the local elections held within the districts is the same over all districts. However, this need not be the case, and different voting rules may be applied within different districts. Real-world examples include the US presidential elections where some states use a proportional voting rule instead of the Plurality Voting rule that is used in all other states. Hence, an interesting direction would be to consider a generalization of our setting and study the effect of combinations of different voting rules to the distributed distortion of district-based elections, and whether this effect can be diminished by carefully defining the districts. 

Another assumption we have made is that the district-based election winner is selected to be the alternative that gathers the largest weight over the districts, or in the case of equal weights, the one that wins the largest number of local elections. While this is natural and well-motivated by real-world examples, there are applications where the election winner is selected differently. For instance, consider the Eurovision Song Contest\footnote{https://eurovision.tv/about/voting}, where each participating country holds a local voting process (consisting of a committee vote and an Internet vote from the people of the country) and then assigns points to the 10 most popular options, on a 1-12 scale (with 11 and 9 omitted). The winner of the competition is the participant with the most total points. Consequently, one could consider scenarios where each district outputs a ranking over the alternatives (instead of a single alternative), and then the winner is selected according to some positional scoring rule, like the Borda-like rule in the case of the Eurovision song contest. 

Moving away from the unconstrained normalized setting that we considered here, it would be very interesting to analyze the effect of districts in the case of \emph{metric preferences}~\citep{anshelevich2018approximating}, a setting that has received considerable attention in the recent related literature on the distortion of voting rules without districts~\citep{abramowitz2019awareness,anshelevich2017randomized,feldman16votingfacility,goel2018relating,goel2017metric,gross2017twoagree,munagala2019improved,skowron2019approval}. Other important extensions include settings in which the partitioning of voters into districts is further constrained by natural factors such as geological locations~\citep{lewenberg2017divide} or connectivity in social networks~\citep{lesser17networkdistricts}.

\bibliographystyle{named}
\bibliography{references}

\end{document}